\newcommand{\SU}{\mathop{\rm SU}}
\newcommand{\SO}{\mathop{\rm SO}}
\newcommand{\Sp}{\mathop{\rm {}Sp}}
\newcommand{\Sl}{\mathop{\rm {}SL} }
\newtheorem{theorem}{Theorem}[section]
\newtheorem{lemma}[theorem]{Lemma}
\newtheorem{prop}[theorem]{Proposition}
\theoremstyle{definition}
\newtheorem{definition}[theorem]{Definition}
\theoremstyle{remark}
\newtheorem{remark}[theorem]{Remark}
\newtheorem{example}[theorem]{Example}
\newenvironment{notation}{\textbf{Notation.}}{ }
\begin{document}
\begin{titlepage}
\begin{flushright}
MPP-2012-128
\end{flushright}
 
\begin{center}
\vspace{3cm}
\baselineskip=18pt{\huge
Tate form and weak coupling limits in F-theory   \\
}
\vspace{2 cm}
{\Large  Mboyo Esole$^{\spadesuit,\heartsuit, 
 \clubsuit}$  and $\;\,$\Large Raffaele Savelli$^\diamondsuit$ }\\
\vspace{1 cm}
\center{
${}^\spadesuit$Department of Mathematics, \ Harvard University, Cambridge, MA 02138, U.S.A.\\
${}^\heartsuit$Jefferson Physical Laboratory, Harvard University, Cambridge, MA 02138, U.S.A.\\
${}^\clubsuit$Taida Institute for Mathematical Science, National Taiwan University, 
 Taipei, 
Taiwan.\\
$^\diamondsuit$ Max-Planck-Institut f\"ur Physik, F\"ohringer Ring 6, 80805 Munich, Germany.}
\end{center}

\vspace{1cm}
\begin{center}

{\bf Absract}
\vspace{.3 cm}
\end{center}
{\small
We consider the weak coupling limit of F-theory in the presence of non-Abelian gauge groups implemented using the traditional  ansatz coming from 
 Tate's algorithm. We classify the types of singularities that could appear in the weak coupling limit and explain their resolution. In particular,  the weak coupling limit of $\SU(n)$ gauge groups leads to an orientifold theory which suffers from conifold singulaties that do not admit a crepant resolution 
compatible with the orientifold involution. We present a simple resolution to this problem by introducing a new weak coupling regime that admits singularities compatible with both a crepant resolution and an  orientifold symmetry. We also comment on possible applications of the new limit to model building.
We finally discuss other  unexpected phenomena as for example the existence of several non-equivalent directions to flow from strong to weak coupling leading to different gauge groups. 
\vfill

\flushleft{$^\spadesuit$Email:\quad {\tt    esole at  math.harvard.edu}\\
$^\diamondsuit$Email:\quad{\tt    savelli at  mpp.mpg.de}}
}

\end{titlepage}
\addtocounter{page}{1}
 \tableofcontents{}
\newpage

\section{Introduction}

F-theory \cite{Vafa:1996xn} is a  powerful tool to engineer  gauge theories appearing in type IIB string theory and in the hetorotic string theory using the geometry of elliptic fibrations \cite{Morrison:1996na, Bershadsky:1996nh}. When the structure of an elliptic fibration is seen through the eyes of string theory, the constraints coming from physics lead to  surprising new mathematical results on the structure of elliptic fibrations \cite{FMW, KLRY, AE1,AE2,EY,  EFY, GM1, GM2,Braun:2011ux,
Park:2011ji, Fullwood:2011bf, 
Morrison:2011mb,
Morrison:2012js,Taylor:2012dr,Morrison:2012ei}.  Although F-theory is defined for elliptically fibered Calabi-Yau varieties, many of the mathematical results obtained in F-theory apply in a larger setup  without restrictions on the dimension of the elliptic fibration and without assuming the Calabi-Yau condition \cite{AE1,AE2,GM2}. 

One fascinating aspect of the F-theory  approach  is that  it provides a window to non-perturbative aspects of   type  IIB string theory.  
This is because the  elliptic fibration implements geometrically several non-perturbative aspects of S-duality. 
In type IIB string theory S-duality changes the value of the string coupling constant and can relate weak and strong couplings. 
Understanding the connection between the strongly coupled regime of F-theory and weakly coupled type IIB string theory is a central theme in F-theory \cite{Denef.LH}.

Sen has provided a beautiful description of a limit of an elliptic fibration which results in a type IIB orientifold theory at weak coupling \cite{
Sen:1996vd,Sen.Orientifold}. 
The orientifold theory is defined as the double cover $X\rightarrow B$ of the base over which the elliptic curve is fibered:
$$
X:\xi^2 =h.
$$
Sen's limit was originally defined  essentially for elliptic fibrations in the Weierstrass form. 
 A systematic way to describe the links between the orientifold theory obtained in  Sen's weak coupling limit and the geometry of the elliptic fibration were described in \cite{CDE}. 
One would like to be able to uplift a compactification in type IIB to F-theory in order to understand its strong coupling behavior.  
Such effort relies on exploiting properties of a given weak coupling limit.  
Progress on that has been made in the past few years based essentially on Sen's limit 
 \cite{Collinucci:2008zs}, \cite{Collinucci:2009uh,Blumenhagen:2009up}. 
Generalization of the weak coupling limit to other models of elliptic fibrations  were later obtained in \cite{AE2, EFY}. These generalizations are based on a geometric reformulation of  Sen's limit in terms of a transition from semi-stable to unstable singular fibers \cite{AE2}. 
These new limits illustrate among other things the non-uniqueness of the weak coupling limit in F-theory \cite{AE2, EFY}.  Different limits of the same F-theory model can lead to very different configurations at weak coupling.  
Reciprocally,  a given type IIB model can admit several nonequivalent  uplifts to F-theory. See for example \cite{Braun:2009wh} for explicit examples.

 Recently, F-theory has been used to obtain  
local models of Grand Unified Theories (GUTs), starting from the papers \cite{Donagi:2008ca,Beasley:2008dc,Beasley:2008kw,Hayashi:2009ge}. Global completions of GUT models have also been intensively studied \cite{Andreas:2009uf,Blumenhagen:2009yv,Grimm:2009yu, Marsano:2009gv, Cvetic:2010rq,Marsano:2011hv,Collinucci:2009uh, Tatar:2012tm, Marsano:2012yc}, with special focus on $\SU(5)$ configurations. See \cite{Weigand:2010wm} for a review on the subject and a more complete list of references.

In F-theory, there are specific ans\"atze that provide a realization of a particular  gauge group using a 
 generalized  
Weierstrass model with coefficients vanishing along the divisor of interest up to  certain multiplicities directly inspired by Tate's algorithm \cite{Bershadsky:1996nh, Katz:2011qp}. For that reasons, such ans\" atze are usually called {\em Tate form} in the F-theory literature. The restrictions to put a given elliptic fibration with a given singularity type into  a Tate form have been  analyzed recently in \cite{Katz:2011qp} where some obstructions have been noticed for certain groups ($SU(m)$ (with $6\leq m\leq 9$), $\Sp(3)$, $\Sp(4)$, $\SO(13)$, $\SO(14)$) or in the presence of certain matter representations (such as the 2-symmetric representation of $\SU(m)$). Normal forms for local equations for classical groups were also given in \cite{Katz:2011qp}. 
Donagi and Wijnholt have proposed a realization of Sen's limit for elliptic fibrations in Tate forms \cite{Donagi:2009ra}. 
For  a generalized Weierstrass model:
$$
y^2 z= x^3 + a_1 xyz + a_3 y z^2 =x^3 + a_2 x^2 z + a_4 x z^2 + a_6 z^3,
$$
Donagi and Wijnholt use the following  ansatz:
$$
\text{Donagi-Wijnholt}
\begin{cases}
a_3\rightarrow \epsilon a_3\\
a_4\rightarrow \epsilon a_4\\
a_6\rightarrow \epsilon^2 a_6
\end{cases}
$$
We will refer to that limit as the {\em DW limit}. It provides a simple realization of Sen's limit for a generalized Weierstrass model.
As we will see in section \ref{DWAnsatz}, the DW limit reduces to Sen's limit when the generalized Weierstrass equation is reduced to a short Weierstrass form.  
 The DW limit  is consistent with the Tate form of a nodal curve as it has $a_3=a_4=a_6=0$ in the limit $\epsilon\rightarrow 0$. 
  It applies in particular to models in which a gauge group is implemented using a Tate form.
For every nonzero value of $\epsilon$, it preserves the Tate form and therefore the gauge group. However as we take the limit $\epsilon\rightarrow 0$ the gauge group can change as we will see. 
 In the DW-limit, the double cover of the base used to describe the orientifold theory in type IIB takes the form 
\begin{equation}
X:\quad \xi^2 =a_1^2+ 4a_2.
\end{equation}
$X$ is nonsingular when $a_2=0$ describes a nonsingular divisor. However, the Tate form for groups other than $\Sp(n)$ will require $a_2$ to factorize and therefore to admit singularities. Typically, if  $a_1, a_2$ factorize as $a_2=\sigma^{m_2} b_{2,m_2}$ and $a_1=\sigma^{m_1} a_{1,m_1}$  for a given Tate form, we get 
 \begin{equation}
X:\quad \xi^2 =\sigma^{2 m_1}a_{1,m_1}^2+4 \sigma^{m_2} a_{2,m_2},
\end{equation}
which is singular when $m_2>0$. 

The presence of singularities is acceptable if one can obtain a crepant resolution  
$$\mu:\tilde{X}\rightarrow X$$
 compatible with the involution $\xi\mapsto -\xi$, which represents the $\mathbb{Z}_2$ orientifold symmetry in the weakly coupled type IIB picture. 
The crepant condition means that the first Chern class of the resolved variety $\tilde{X}$ is the pullback of the first Chern class of $X$: 
$$c_1(\tilde{X})=\mu^\star c_1(X).$$
 This is important in order to make sure that the string target space is still Calabi-Yau.  One would also want the resolution to be a double cover in order to still have an orientifold theory in the weak coupling limit. The compatibility condition requires that the involution of  $\tilde{X}$ naturally reduces to the one of $X$. We will call a resolution of a double cover an {\em admissible crepant resolution} if it is crepant and compatible with the double cover. 
Double covers and their admissible crepant resolutions are well studied in mathematics. We review the main results in  appendix \ref{AdmissibleResolutions}.  We  point out an important  subtlety involving the preservation of the  D7 tadpole after a resolution of singularities in section \ref{PhysicalProperties} and discuss it  further  in the conclusion (section \ref{Outlook}).

 In the case of unitary gauge groups, the DW weak coupling limit gives a singular orientifold theory which does not admit an admissible crepant resolution. Indeed, for the Tate form of a unitary gauge group we have $a_2=\sigma a_{2,1}$. This implies that the double cover admits  conifold-like singularities in codimension-3:
\begin{equation}\label{ConifoldGeometry}
X:\quad \xi^2 =a_1^2+ 4\sigma a_{2,1},
\end{equation}
where $\sigma=0$ defines the divisor on which the fiber $I^{s}_{n}$ is located and gives the $\SU(n)$ gauge group. 
Such a singularity is a serious problem, especially for GUT model building, as it makes the theory at weak coupling ill-defined for any kind of computation. 
These conifold singularites can be  avoided at the cost of working with elliptic fibrations  for which the conifold points $a_1=\sigma=a_{2,1}=0$ do not actually occur due to the intersection theory of the base \cite{Krause:2012he}. Such geometries can be algorithmically constructed starting from type IIB, as explained in \cite{Collinucci:2009uh}. 

In this paper,  we will consider a systematic way of solving the conifold problem in the weak coupling limit of $\SU(n)$ theories. If one is willing to give up the choice of having a gauge theory in eight dimensions with unitary gauge group, an immediate solution of the conifold problem would be to deform the singularity by adding on the right hand side of \eqref{ConifoldGeometry} a generic polynomial of the appropriate degree.  This breaks the unitary gauge group we started with by giving an expectation value to fields in its antisymmetric representation \cite{Donagi:2009ra}. However, this would only make us move from a split to a non-split singularity of the Calabi-Yau fourfold, by rendering the latter slightly more generic. As a result, monodromies of the fiber are introduced and the unbroken gauge group is of the symplectic type. In contrast, in this paper, we construct new limits that lead to  a double cover allowing for admissible crepant resolution. The limits we consider are specialization of the DW-limit. The idea behind is very simple: knowing the conditions to have an admissible crepant resolution (see appendix \ref{AdmissibleResolutions}), we improve the DW-limit in order to replace the conifold points by a better-behaved singularity. In particular, there is one choice of singularity which even allows us to preserve the unitary gauge group of the starting F-theory configuration. We consider the limit
$$
\begin{cases}
a_{2,1}\rightarrow \epsilon a_{2,1}+\sigma a_{2,2}\\
a_3\rightarrow \epsilon a_3\\
a_4\rightarrow \epsilon a_4\\
a_6\rightarrow \epsilon^2 a_6
\end{cases}
$$
 In this new limit, the conifold singularities are replaced by the singularities of  suspended pinch points:
$$
X: \xi^2=a_1^2 + \sigma^2 a_{2,2}.
$$
Such a double cover admits an admissible crepant resolution obtained by blowing-up the  locus $\xi=a_1=\sigma=0$ in codimension-two. It is also evident the presence of brane and image brane stacks when we put $\sigma=0$. We discuss this model at length and propose some applications of it to GUT model building. In particular we analyze the structure of matter curves before and after resolution and comment on specific suspended pinch point geometries which contain all the expected matter spectrum. We also raise a few questions, especially regarding the realization of Yukawa couplings, which we hope to address in the near future.

The paper is organized as follows. We begin in section \ref{Notations} by briefly introducing the geometries of Weierstrass elliptic fibrations and by fixing our notations, whereas in section \ref{SenWeakCoupling} we review the Sen weak coupling limit and its orientifold interpretation in type IIB string theory. In section \ref{DWAnsatz}
we present a general overview of the Donagi-Wijnholt limit and in subsection \ref{SolvingConifold} propose two alternatives which solve the conifold problem and show different physical features among themselves. Section \ref{GeneralitiesWCL} gives a broader overview on weak coupling limits and \ref{BraneContentDW} provides a detailed analysis of the brane content after taking DW limit for each kind of Kodaira singularity. Section \ref{resCY3spp} is instead devoted to a more-in-depth analysis of one of our alternative proposals, i.e. the suspended pinch point geometry: We blow-up the singularity and discuss the features of the resolved geometry from both a mathematical and a physical perspective. We draw our conclusions in section \ref{Outlook}. Finally some technical details are provided in the appendices: Some mathematical theorems relevant for our analysis are presented in appendix \ref{AdmissibleResolutions}, where we also propose an equivalent small resolution of the suspended pinch point geometry; In appendix \ref{BlowUpQuadricCone} a blow-up is given for the Quadric Cone geometry.

\section{Elliptic curves and Weierstrass models: a quick review}\label{Notations}

\begin{definition}[Elliptic curve and Weierstrass normal equation]
An elliptic curve over a field $K$ is an irreducible nonsingular projective algebraic curve of genus $1$  with a choice of a $K$-rational point, the origin of the group law.
  It  follows from Riemann-Roch theorem that  an elliptic curve over a field $K$  is isomorphic to a plane cubic curve,  cut in $\mathbb{P}^2_K$ by the following  {\em generalized Weierstrass form}:
\begin{equation}\label{wt}
E:\quad zy^2 + a_1 x y z+ a_3  y z^2 = x^3 + a_2 x^2 z + a_4 x z^2 + a_6 z^3,\quad a_i \in K.
\end{equation}
Geometrically, the marked point of the  Weierstrass form   of an elliptic curve is its intersection point with the line at infinity $z=0$, namely the point $[x:y:z]=[0:1:0]$, which is a point of inflection and the only point of infinity of the curve. 
The curve \eqref{wt} is called a Weierstrass normal  form  since (in characteristic different from 2 and 3) after the change of variables: 
$\wp=x+\frac{1}{12}(a_1^2 + 4 a_2)$, $\wp'=2y + a_1 x + a_3$
 it reduces to the traditional cubic equation satisfied by the Weierstrass $\wp$-function and its derivative:
$E_\Lambda:(\wp')^2=4 \wp^3- g_2 \wp-g_3$.
\end{definition}

\subsection{Elliptic fibration}
An elliptic fibration over a base $B$ can be seen as an elliptic curve over the function field of $B$. We will define elliptic fibrations by a Weierstrass model. 
The Weierstrass model over a base $B$ is written in a projective bundle $\mathbb{P}(\mathscr{E})\rightarrow B$ where $\mathscr{E}=\mathscr{O}_B\oplus \mathscr{L}^2\oplus \mathscr{L}^3$.  

\begin{notation}
The sheaf of regular functions of a variety  $B$ is denoted as usual by $\mathscr{O}_B$. 
We use the classical convention for  projective bundles $\mathbb{P}(\mathscr{E})$: At each point they are defined by the set of lines  of $\mathscr{E}$. We denote the tautological line bundle of the projective bundle $\mathbb{P}(\mathscr{E})$ by $\mathscr{O}_B(1)$. When the context is clear, we just denote it $\mathscr{O}(1)$.  
We denote by $\mathscr{O}(n)$ for $n>0$ the $n$th tensor product of $\mathscr{O}(1)$. Its dual is $\mathscr{O}(-n)$.
\end{notation}

The coefficients $a_i$ of \eqref{wt}  are sections of $\mathscr{L}^i$. The projective coordinates $[x:y:z]$ of this projective bundle are such that 
$x$ is a section of $\mathscr{O}(1)\otimes \pi^\star\mathscr{L}^2$, $y$ is a section of $\mathscr{O}(1)\otimes \pi^\star\mathscr{L}^3$ and $z$ is a section of 
$\mathscr{O}(1)$.  The elliptic fibration is a section of $\mathscr{O}(3)\otimes \pi^\star\mathscr{L}^6$ in the  bundle $\mathbb{P}(\mathscr{E})$.

The elliptic fibration $\varphi:Y\rightarrow B$ has vanishing Chern class if $ c_1(B)=c_1(\mathscr{L})$.
For most of this paper, except for the physical applications in section \ref{resCY3spp}, we will not need to impose the Calabi-Yau condition and we will also not restrict the dimension of the base. 
\subsection{Formulaire}
An elliptic curve given by a Weierstrass equation is singular if and only if its discriminant  $\Delta$ is zero.  If we denote by $\bar{K}$ the algebraic closure of $K$, two smooth elliptic curves are isomorphic over $\bar{K}$ if and only if they have the same $j$-invariant.  We recall  the formulaire of  Deligne and Tate which is   useful to express the discriminant $\Delta$,  the $j$-invariant and to reduce the Weierstrass equation into simpler forms:
\begin{align}\label{eq.formulaire}
b_2=& a_1^2+ 4 a_2,\  \    b_4= a_1 a_3 + 2 a_4 ,\   \   b_6 = a_3^2 + 4 a_6 , \    \   b_8 =b_2 a_6 -a_1 a_3 a_4 + a_2 a_3^2-a_4^2,\\
 c_4=& b_2^2 -24 b_4  (=12 g_2), \quad c_6 = -b_2^3+ 36 b_2 b_4 -216 b_6 (=216 g_3).\\ 
\intertext{The coefficients $b_i$ and $c_i$ are sections of $\mathscr{L}^i$. The  discriminant and the $j$-invariant are given by: }
  \Delta=&-b_2^2 b_8 -8 b_4^3 -27 b_6^2 + 9 b_2 b_4 b_6 (=g_2^3 -27 g_3^2),\qquad 
  j=\frac{c_4^3}{\Delta}(=1728 J).
\end{align}
These quantities are related by the following relations:
\begin{equation}
4 b_8 =b_2 b_6 -b_4^2 \quad \text{and}\quad 1728 \Delta=c_4^3 -c_6^2. 
\end{equation}
The variables $b_2, b_4, b_6$ are used to express the Weierstrass equation after completing the square in $y$ by a redefinition $y\mapsto y-\frac{1}{2}(a_1 x +a_3 z)$:
$$
z y^2= x^3 + \frac{1}{4}b_2 x^2 z+  \frac{1}{2} b_4 x z^2+  \frac{1}{4} b_6 z^3.
$$
The variables $c_2, c_4$ and $c_6$ are then obtained after eliminating the term in $x^2$ by the redefinition $x\mapsto x-\frac{1}{12} b_2 z $, in order to have the short form of the Weierstrass equation:
$z y^2=x^3 - \tfrac{1}{48} c_4 x z^2 -\tfrac{1}{864} c_6 z^3.$
We will use the following normalization of the short Weierstrass equation (obtained by introducing  $f= - \frac{1}{48} c_4$ and $ g=-\frac{1}{864} c_6 $): 
\begin{equation}
E:z y^2=x^3 + f x z^2 +g z^3, \quad \Delta=-16(4f^3 + 27 g^2) , \quad j=1728\frac{4f^3 }{4f^3 + 27 g^2}.
\end{equation}

\section{Weak coupling limit of elliptic fibrations}\label{SenWeakCoupling}

In Type IIB string theory, when taking into account the back-reaction of space-time-filling 7-brane sources, the string coupling $g_s$ is not a constant, but is varying along the internal space. It is defined by (the expectation value of) the exponential of the dilaton $\phi$. Together with the axion $C_0$, the dilaton  forms a complex scalar field called the {\em axio-dilaton}:
\begin{equation}
\tau=C_0+\mathrm{i}\, e^{-\phi},
\end{equation}
which transforms under the  S-duality group $\Sl(2,\mathbb{Z})$ as the complex modulus of  a torus under the action of the modular group:
\begin{equation}
\tau \mapsto \frac{a \tau + b}{c\tau + d}, \quad a d-b c =1,\quad  a, b,c,d\in \mathbb{Z}.
\end{equation}
F-theory \cite{Vafa:1996xn} is a non-perturbative approach to  type IIB string theory that implements geometrically several non-trivial constraints of  S-duality  using the  geometry of  elliptic fibrations.  In F-theory, the axio-dilaton field is geometrically  modeled as  the modulus of the fiber of an elliptic fibration $$\varphi:Y\rightarrow B.$$ The base $B$ of the fibration is the space over which  type  IIB string theory is compactified. The size of the elliptic curve is not physical and therefore the elliptic fiber is only defined modulo homothety. It follows that each regular fiber can be expressed as a quotient:  $$\mathscr{E}_\tau:=\mathbb{Z}/(\mathbb{Z}+\tau\mathbb{Z}), \quad \Im (\tau)>0$$
 depending on the modulus $\tau$ living in the complex upper half-plane. 

Since much of our understanding of string theory is based on perturbative   calculations that make sense only for small string coupling $g_s$, it is  useful to understand how  the strongly coupled physics described by  F-theory  flows to a weakly coupled type IIB description. The limit   $g_s\rightarrow 0$ is known as the weak coupling limit of F-theory.  
 In terms of the elliptic fiber, a vanishing string coupling $g_s$ corresponds to an infinite $j$-invariant. This can be seen by considering the Laurent expension of the $j$-invariant in terms of the variable $q:=\exp(2\pi \mathrm{i} \tau)$ parametrizing the  punctured unit disk:
\begin{equation}\label{jinvariant.Laurent}
j(q)=744+\frac{1}{q}+\sum_{n>0} c_n q^n, \quad q:=\exp(2\pi \mathrm{i} \tau).
\end{equation}
In particular, the absolute value of $q$ is related to the inverse of string coupling as:
$$
|q|=\exp(-\frac{2\pi}{ g_s}).
$$
It follows that the weak coupling limit $g_s\rightarrow 0$ is equivalent to  approaching the center of the unit disk ($|q|\rightarrow 0$) and therefore to an infinite  $j$-invariant:
\begin{equation}
 (g_s\rightarrow 0)  \iff   ( j\rightarrow \infty). 
\end{equation}
To make connection with the  perturbative regime of  IIB string theory, one can consider  certain degenerations of the elliptic fibration 
 such that the  string coupling becomes small almost everywhere over  the base $B$. This is called a {\em  weak coupling limit of the elliptic fibration}.
 In the simplest set-up, such degenerations can be expressed in terms of a  family of elliptic fibrations  $\varphi_\epsilon: Y_\epsilon \rightarrow B$ parametrized by  a deformation parameter  $\epsilon$ for which the general fiber of the fibration $\varphi_\epsilon: Y_\epsilon \rightarrow B$ becomes a nodal curve (or more generally a semi-stable curve) as   $\epsilon$ approaches zero.  

\subsection{Sen's limit of Weierstrass models}
Sen \cite{Sen:1996vd,Sen.Orientifold} has proposed a simple realization of  the weak coupling limit   for  an elliptic fibration defined by a  short Weierstrass model
\begin{equation}\label{ShortWeierstrass}
y^2=x^3+ f x + g.
\end{equation}
Geometrically, the main idea of Sen's limit is to express the Weierstrass model as a deformation of a fibration of nodal curves. 
The coefficients  $f$ and $g$ are then polynomials in the deformation parameter $\epsilon$ so that the general fiber is a nodal curve at $\epsilon=0$. This ensures that the $j$-invariant goes to infinity as $\epsilon$ goes to zero. 
Sen's limit is explicitly given in terms of the following expression for $f$ and $g$:
\begin{equation}\label{Sen.original}
\text{Sen's limit}
\  
\begin{cases}
f=-3 h^2+ \epsilon\eta \\
g=-2h^3+ \epsilon h \eta + \epsilon^2 \chi
\end{cases}
\end{equation}
For every fixed value of $\epsilon$,  the variables $\eta$ and $\chi$ ensure that the Weierstrass model is as general as possible. 
The elliptic fibration is then 
\begin{equation}
Y_{(\epsilon)}: y^2=(x + h)^2 (x - 2 h)+\epsilon (\eta x + h \eta )+ \epsilon^2 \chi.
\end{equation}
At $\epsilon=0$, we recognize a fibration of nodal curves:
$$
 Y_{(0)}: \quad y^2=(x+h)^2(x-2h).
$$
 Since a nodal curve has an infinite $j$-invariant, this ensures that the string coupling vanishes over a generic point of  the base as $\epsilon$ approaches zero. 
We can get to the same conclusion by computing the leading terms  of the Laurent expansion of the $j$-invariant as a function of  $\epsilon$: 
\begin{equation} \Delta=-9 \epsilon^2 h^2(\eta^2 +12 h\chi) +O(\epsilon^3), \quad 
 j=  1728 \frac{12 h^4}{\epsilon^2(\eta^2+12 h\chi)}+\sum_{k\geq -1} u_k(h,\eta,\chi) \epsilon^k.\end{equation}
In particular, the $j$-invariant has a pole of order two at $\epsilon=0$.
\begin{remark}
In Sen's limit, $g_s$ goes to zero nearly everywhere.  More precisely, away from $h=0$. Over $h=0$, we have to be more careful as the leading order of $j$ vanishes for non-zero values of $\epsilon$. By first imposing $h=0$ and then taking the limit, one can show that $j=1728$ over a generic point of $h=0$ in the limit $\epsilon\rightarrow 0$.
\end{remark}
\subsection{The orientifold interpretation of Sen's limit}
The monodromy of the axio-dilaton field around $h=0$ due to the behavior of the $j$-invariant $j\sim h^4/\epsilon^2 (\eta^2+12 h\chi)$ indicates that $h=0$ is the location of an O7-plane. 
 We recall that $h$ is a section of $\mathscr{L}^2$ and the discriminant is a section of $\mathscr{L}^{12}$. 
Since $h$ is a section of an even line bundle, it can describe the branch locus of a double cover of the base. 
Explicitly, the double cover $\rho:X\rightarrow B$ of the base $B$ branched along the divisor $\underline{O}\subset B: h=0$ is given by the canonical equation of a double cover:
\begin{equation}
X:\quad \xi^2=h,
\end{equation}
which is automatically Calabi-Yau $n$-fold if the elliptic fibration $\varphi:Y\rightarrow B$ we started with is also Calabi-Yau $(n+1)$-fold.
The weak coupling limit, described as a geometrical construction, does not require the Calabi-Yau condition and can be defined for an elliptic fibration over a base of arbitrary dimension. 
The branch divisor $\underline{O}$ corresponds to the orientifold locus $O:\xi=0$ in the double cover $X$.

The leading term in the discriminant pulls-back to the  double cover $X$ as follows 
\begin{equation}
\rho^\star\Delta=-9\epsilon^2 \xi^4 ( \eta^2+12 \xi^2 \chi)+ O(\epsilon^3).
\end{equation}
The corresponding $j$-invariant is 
\begin{equation}
j  \propto 1728 \frac{12\xi^{8}}{\epsilon^2  ( \eta^2+12 \xi^2 \chi)}
\end{equation}
This is physically described as an orientifold at $O:\xi=0$ and a D7 Whitney-brane at $D_w:\eta^2+12 \xi^2 \chi=0$. 
The divisor $D_w$ has the singularity of a Whitney umbrella:  A double line in codimension-2 that enhances to a locus of  pinch points in codimension-3.

\subsection{Sen's limit for a Weierstrass model in Tate form}

When an elliptic fibration is  given in the Weierstrass equation in Tate form :
\begin{equation}\label{gen.Weierstrass}
y^2+ a_1 xy + a_3 y =x^3+ a_2 x^2 + a_4 x +a_6,
\end{equation}
 Sen's limit can still be defined  by remembering that the previous equation can be put into the short Weierstrass form \eqref{ShortWeierstrass}
$$y^2=x^3-\frac{c_4}{48} x -\frac{c_6}{864},$$
 with $c_4$ and $c_6$ defined as in equation \eqref{eq.formulaire}. 
 Sen's limit as expressed in equation \eqref{Sen.original} corresponds for a   Weierstrass equation in Tate form\eqref{gen.Weierstrass} to the requirement:   
\begin{equation}
\text{Sen's limit} 
\begin{cases}
b_4\rightarrow\epsilon \eta \\
 b_6\rightarrow \epsilon^2 \chi\;.
\end{cases}
 \label{sen}
\end{equation} 
The branch divisor  is given by $\underline{O}: b_2=0$ in the base $B$. It follows that the orientifold is defined by the double cover: 
\begin{equation}
X: \xi^2=b_2, \quad b_2:=a_1^2+4 a_2.
\end{equation}
At leading order in the discriminant locus: 
\begin{equation}
\Delta= \epsilon^2 b_2^2 b_8+O(\epsilon^3).
\end{equation}
It is composed of a factor $b_2^2$ which is the contribution from the branch locus and a factor $b_8:=b_2 b_6-b_4^2$. 
When pulled-back to the double cover we have 
\begin{equation}
\rho^*\Delta= \epsilon^2 \xi^4 (\xi^2 b_6 - b_4^2)+O(\epsilon^3).
\end{equation}
The factor $\xi^4$ is the contribution from the orientifold and the second factor gives a Whitney brane $\xi^2 b_6 - b_4^2=0$.

Since  $b_4=a_1a_3+2a_4$ and $b_6=a_3^2+4a_6$, it is  important to realize that  the limit \eqref{sen} can be implemented in many different ways if we start from Tate's general form of the Weierstrass equation. In the F-theory literature, the realization which is normally used is the ansatz of Donagi and Wijnholt \cite{Donagi:2009ra}. 
We will analyze it in some details in the next section.

\section{The Donagi-Wijnholt ansatz}\label{DWAnsatz}
Donagi and Wijnholt \cite{Donagi:2009ra} have proposed the following realization of  Sen's limit \eqref{sen} valid for elliptic fibrations given by the Tate form of a Weierstrass model:
\begin{equation}\label{sen.Donagi}
\text{Donagi-Wijnholt}
\begin{cases}
a_3\rightarrow \epsilon a_3, \\
a_4\rightarrow \epsilon a_4, \\
a_6\rightarrow \epsilon^2 a_6.
\end{cases}
\end{equation}
Geometrically, this limit  is a degeneration of a Weierstrass model to  a fibration of nodal curves when $\epsilon=0$. 
In the limit $\epsilon=0$, we have $a_3=a_4=a_6=0$, which gives the nodal curve:
\begin{equation}
(y+\frac{1}{2}a_1 x)^2=x^2 ( x+\frac{1}{4} b_2 )\,. 
\end{equation}
The nodal curve specializes to a cusp over the divisor in the base $\underline{O}: b_2=0$. 
Since $b_2=0$ is a section of a line bundle  $\mathscr{L}^2$ , we can define a double cover $\rho: X\rightarrow B$  branched at $h=0$:
\begin{equation}
X:\quad \xi^2=a_1^2+4a_2,
\end{equation}
which is  used to define an orientifold theory in type IIB, understood as the weak coupling limit of F-theory model given by the Weierstrass equation with coefficients $(a_1, a_2, a_3, a_4, a_6)$. 
The double cover $X$ is nonsingular  as long as $a_2=0$ defines a nonsingular divisor in the base. 
As we will see in the following subsections, $X$ is usually singular when $Y$ admits a  gauge group  implemented by  ansatz coming from Tate's algorithm except in the case of  symplectic gauge groups.

\subsection{Gauge groups and singularities in the weak coupling limit}

In F-theory, non-Abelian gauge groups  appear when the elliptic fibration admits reducible singular fibers over a component of the discriminant locus \cite{Morrison:1996na, Bershadsky:1996nh}.  Such singular fibers  located over a codimension-one  locus in the base of an elliptic fibration were classified by Kodaira and N\'eron and admit dual graphs that are  extended ADE Dynkin diagrams \cite{Kodaira,Neron}. Non-simply laced gauge groups can also be obtained 
 by taking into account the 
monodromy action by
an outer automorphism on the nodes of the dual graph of the singular fiber   \cite{Bershadsky:1996nh}. 
When working with a Weierstrass model, non-Abelian gauge groups can occur only when the Weierstrass model becomes singular, as a smooth Weierstrass  model admits only irreducible singular fibers (regular elliptic curves,  nodal curves and cusps).
To implement a certain non-Abelian gauge group over a  divisor $\sigma=0$ of the base, one can use an ansatz inspired directly from  Tate's algorithm  \cite{Bershadsky:1996nh}. These ans\"atze are now familiarly called {\em Tate forms}.
The original list of Tate forms in \cite{Bershadsky:1996nh} was corrected by Grassi and Morrison \cite{GM1}. A Weierstrass model admitting a certain gauge group is not necessarily realized by one of the Tate forms. A careful analysis was done recently to see when it is possible to achieve these forms and more general ans\"atze were presented when it was not possible to do so \cite{Katz:2011qp}.   
 We will refer to the classification in table 2 of \cite{Katz:2011qp} throughout the paper. It is reproduced in table \ref{Table.TateForm}. The ansatz  requests  that each of the  coefficients $(a_1,a_2,a_3,a_4,a_6)$  of the Weierstrass equation  vanishes with a certain multiplicity over  $\sigma=0$ as stipulated by Tate's algorithm. 
The condition for non-simply laced gauge groups are conditions on factorizations of a quadratic or cubic equation defined from the coefficients $a_k$. 
Following the notation familiar from Tate's algorithm, we denote  
\begin{equation}
a_i=a_{i,m_i} \sigma^{m_i}, 
\end{equation}
where $m_i$ denotes the multiplicity of the divisor $\sigma=0$ over the subvariety defined by $a_i=0$. We  assume that $a_{i,m_i}$ is nonzero for a generic point of $\sigma=0$. When such a  gauge group is implemented in this way, we can  realize the Donagi-Wijnholt ansatz as  conditions on $(a_{3,m_3}, a_{4,m_4}, a_{6,m_6})$: 
\begin{equation}\label{DWLimitNonAbelian}
\begin{cases}
a_{3, m_3}\rightarrow \epsilon a_{3,m_3}, \\
a_{4, m_4}\rightarrow \epsilon a_{4,m_2}, \\
a_{6, m_4}\rightarrow \epsilon^2 a_{6,m_6}.
\end{cases} 
\end{equation}
This leads to an orientifold theory defined by  the following  double cover at weak coupling:
\begin{equation}
X:\quad \xi^2=h, \quad \text{where}\quad  h:=\sigma^{2m_1}a_{1,m_1}^2+4 \sigma^{m_2} a_{2,m_2}.
\end{equation}
We see immediately that this double cover is singular whenever $m_2>0$.  This implies that it will be singular for all gauge groups realized through Tate forms with the exception of symplectic gauge groups $\Sp(\lfloor \frac{k}{2} \rfloor)$ obtained from the Tate form for a fiber of type    $I^{ns}_{k}$.  When the double cover is singular, we  have to determine if it admits a crepant resolution compatible with the $\mathbb{Z}_2$ involution of the double cover.  

\begin{table}[bht]
\begin{center}
\begin{tabular}{|c|c|c|c|c|c|c|c|}
\hline
type  &  group  & \quad $ a_1$\quad  &
\quad $a_2$\quad  & \quad $a_3$ \quad & \quad $ a_4 $ \quad& \quad $ a_6$ \quad & $\Delta$ \\
\hline \hline $I_0 $  &  ---  & $ 0 $  & $ 0 $  & $ 0 $  & $ 0 $  &
$ 0$  & $0$ \\
 $I_1 $  &  ---  & $0 $  & $ 0 $  & $ 1 $  & $ 1 $  &
$ 1 $  & $1$ \\ 
$I_2 $  & $SU(2)$  & $ 0 $  & $ 0 $  & $ 1 $  & $ 1
$  & $2$  & $
2 $ \\ $I_{3}^{ns} $  &  $Sp(1)$  & $0$  & $0$  & $2$  & $2$  & $3$  & $3$ \\
$I_{3}^{s}$  & \small $\SU(3)$  & $0$  & $1$  & $1$  & $2$  & $3$  & $3$ \\
$I_{2k}^{ns}$  & $ Sp(k)$  & $0$  & $0$  & $k$  & $k$  & $2k$  & $2k$ \\
$I_{2k}^{s}$  & $SU(2k)$  & $0$  & $1$  & $k$  & $k$  & $2k$  & $2k$ \\
$I_{2k+1}^{ns}$  & $Sp(k)$  &  $0$  & $0$  & {\footnotesize $k+1$}  & {\footnotesize $k+1$}   &
{\footnotesize $2k+1$}   & {\footnotesize $2k+1$} 
\\ $I_{2k+1}^s$  & \footnotesize $SU(2k+1)$  & $0$  & $1$  & $k$  & {\footnotesize $k+1$}  &{\footnotesize $2k+1$}  & {\footnotesize $2k+1$} 
\\ $II$  &  ---  & $1$  & $1$  & $1$  & $1$  & $1$  & $2$ \\ $III$  & $SU(2)$  & $1$
 & $1$  & $1$  & $1$  & $2$  & $3$ \\
 $IV^{ns} $  & $Sp(1)$ & $1$  & $1$  & $1$
 & $2$  & $2$  & $4$ \\ $IV^{s}$  & $SU(3)$  & $1$  & $1$  & $1$  & $2$  & $3$  & $4$
\\ $I_0^{*\,ns} $  & $G_2$  & $1$  & $1$  & $2$  & $2$  & $3$  & $6$ \\
$I_0^{*\,ss}$  & $SO(7)$  & $1$  & $1$  & $2$  & $2$  & $4$  & $6$
\\ $I_0^{*\,s} $  & $SO(8)^*$  & $1$  & $1$  & $2$  & $2$  & $4$  &
$6$ \\ $I_{1}^{*\,ns}$
 & $SO(9)$  & $1$  & $1$  & $2$  & $3$  & $4$  & $7$ \\ $I_{1}^{*\,s}$  & $SO(10) $
 & $1$  & $1$  & $2$  & $3$  & $5$  & $7$ 
\\ $I_{2}^{*\,ns}$  & $SO(11)$  & $1$  & $1$
 & $3$  & $3$  & $5$  & $8$
 \\ $I_{2}^{*\,s}$  & $SO(12)^*$  & $1$  & $1$  & $3$
 & $3$  & $5$ & $8$\\
$I_{2k-3}^{*\,ns}$  &\footnotesize $SO(4k+1)$  & $1$  & $1$  & $k$  & {\footnotesize $k+1$} 
 & $2k$  & \footnotesize $2k+3$ \\ 
$I_{2k-3}^{*\,s}$  &\footnotesize $SO(4k+2)$  & $1$  & $1$  & $k$  & {\footnotesize $k+1$} 
 & {\footnotesize $2k+1$}  &\footnotesize $2k+3$ \\
 $I_{2k-2}^{*\,ns}$  & \footnotesize $SO(4k+3)$  & $1$  & $1$  & \footnotesize $k+1$
 & \footnotesize $k+1$  &\footnotesize $2k+1$  & \footnotesize $2k+4$  \\ 
$I_{2k-2}^{*\,s}$  & \footnotesize $SO(4k+4)^*$  & $1$  & $1$
 & \footnotesize $k+1$   & \footnotesize $k+1$   & \footnotesize $2k+1$
 & \footnotesize $2k+4$  \\
 $IV^{*\,ns}$  & $F_4 $  & $1$  & $2$  & $2$  & $3$  & $4$
 & $8$\\ $IV^{*\,s} $  & $E_6$  & $1$  & $2$  & $2$  & $3$  & $5$  &  $8$\\
$III^{*} $  & $E_7$  & $1$  & $2$  & $3$  & $3$  & $5$  &  $9$\\
$II^{*} $
 & $E_8\,$  & $1$  & $2$  & $3$  & $4$  & $5$  &  $10$ \\
\footnotesize non-min  &  ---  & $ 1$  & $2$  & $3$  & $4$  & $6$  & $12$ \\
\hline
\end{tabular}
\end{center}
 \caption{ {\bf Tate forms in F-theory}.   
The superscript (s/ns/ss) stands for
(split/non-split/semi-split), meaning that  (there is/there is not/ there is  a partial)  monodromy action by
an outer automorphism on the vanishing cycles along the singular  locus. }\label{Table.TateForm}
\end{table}

\subsection{Singular double covers from the  Donagi-Wijnholt limit}
We would like to classify the  types of singularities that  occur when the weak coupling limit is reached through the  Donagi-Wijnholt realization of Sen's limit in presence of gauge groups implemented by Tate forms. First we note that the different Tate forms for singular fibers can be organized into four groups characterized by the vanishing multiplicity $(m_1,m_2)$ of the coefficients $a_1=\sigma^{m_1} a_{1,m_1}$ and $a_2=\sigma^{m_2} a_{2,m_2}$:
\begin{itemize}
\item $(m_1,m_2)=(0,0)$ for symplectic groups realized by fibers $I_k^{ns}$ and  $\SU(2)$ realized by a fiber $I_2$.\footnote{We may realize SU(2) with slightly less generic $I_2$ fibers, i.e. with $a_2$ having order of vanishing $1$ along $\sigma=0$. Its weakly coupled physics is very different from the more generic realization and it belongs to the category $(m_1,m_2)=(0,1)$ (see \cite{Collinucci:2010gz} about this distinction).} 
They lead to smooth double covers at weak coupling. 
\item $(m_1,m_2)=(0,1)$ for unitary groups realized by fibers $I^s_k$. They  lead to conifold singularities in the double cover.
\item $(m_1,m_2)=(1,1)$ for  fibers  $I^* _k$ (orthogonal groups $SO(r)$ and  the exceptional group $G_2$) and fibers $III$ and $IV$  (leading to $\Sp(1)$  and $\SU(3)$).  They lead to quadric cone singularities.
\item $(m_1,m_2)=(1,2)$ for exceptional groups $F_4$, $E_6$, $E_7$ and $E_8$. They lead  to Whitney umbrella singularities. 
\end{itemize}
 This is summarized  in table \ref{table.DW}. 
The case $(m_1,m_2)=(0,1)$ is special in the sense that in contrast to the quadric cone singularity and the Whitney umbrella, the conifold singularities do not admit crepant resolutions compatible with the $\mathbb{Z}_2$ involution. This is a serious problem for phenomenological model building based on SU(5) Grand Unified Theories. We will explain how to resolve that problem in section \ref{sppLimit} and get the right GUT group on a D7-stack. In section \ref{resCY3spp}, we will instead address several other features (and issues) of the GUT theories so obtained.  

\begin{table}[htc]
{\footnotesize
\begin{tabular}{|c|lcl|c|}
\hline
$(m_1,m_2)$ & \multicolumn{3}{c|}{Double cover} & 
{
\begin{tabular}{l}
group 
over $\sigma=0$
\end{tabular}
} \\
\hline
$(0,0)$ & $\xi^2 = a_{1,0}^2 + 4 a_{2,0}$ &:& smooth & Symplectic \\
\hline
$(0,1)$ &  $\xi^2 = a_{1,0}^2 + 4 \sigma a_{2,1} $ &:& conifold  & Unitary\\
\hline
$(1,1)$ &  $\xi^2 = \sigma r_{1,1}$ &:& quadric cone & Orthogonal \\
\hline
$(1,2)$ &   $\xi^2 = \sigma^2 r_{1,2}$ &:& Whitney umbrella  & $F_4$, $E_6,E_7$ and $E_8$\\
\hline
\end{tabular}}
\caption{ \footnotesize Singular orientifolds for the weak coupling limit $(a_3, a_4, a_6)\rightarrow (\epsilon a_3, \epsilon a_4, \epsilon^2 a_6)$.
This weak coupling limit  is the ansatz used by Donagi-Wijnholt \cite{Donagi:2009ra}.
 In the first column $(m_1,m_2)$ are such that $a_1=\sigma^{m_1} a_{1,m_1}$ and $a_2=\sigma^{m_2} a_{2,m_2}$. 
In the second column    $r_{m_1,m_2}:=\sigma^{2m_1-m_2} a_{1,1}^2 + 4 a_{2,m_2}$.
}\label{table.DW}
\end{table}

\subsubsection{ $SO(k)$ , $G_2$, $\Sp(1)$ and $\SU(3)$ and quadric cone singularities}

 When $(m_1, m_2)=(1,1)$,  we have the double cover  
\begin{equation}
X:\quad \xi^2= \sigma r_{1,1}, \quad \text{where}\quad  r_{1,1}=\sigma a_{1,1}^2+a_{2,1}.
\end{equation}
It has the singularity of a {\em quadric cone}. The singularity is  the codimension-2 locus $\xi=\sigma=r=0$. The double cover $X$  admits a crepant resolution which is also a double cover. 
This geometry characterizes the orthogonal gauge group obtained in F-theory by Tate form. The exceptional gauge group $G_2$ is obtained from a non-split fiber $I^*_0$ and therefore also leads to such a singular double cover.  For small rank groups,  the Tate form for  $\Sp(1)$ (with a fiber $III$ or $IV^{ns}$) and $\SU(3)$ with a fiber $IV^s$ all have $(m_1, m_2)=(1,1)$.

\subsubsection{Exceptional groups $F_4,E_6, E_7, E_8$ and Whitney umbrella } 
The Tate form for exceptional groups $F_4, E_6, E_7, E_8$ have  $(m_1,m_2)=(1,2)$ and at weak coupling  using the Donagi-Wijnholt ansatz, the orientifold is defined through a double cover with the singularities of a Whitney umbrella: 
\begin{equation}
X:\quad \xi^2= \sigma^2 r_{1,2}, \quad \text{where}\quad  r=a_{1,1}^2+a_{2,2}.
\end{equation}
As the singularity can be resolved by blowing-up the codimension-2 locus $\sigma=\xi=0$ of multiplicity 2, the double cover admits a crepant resolution compatible 
with the $\mathbb{Z}_2$ involution.

 \subsubsection{Unitary groups and conifold singularities.}
 Unitary groups require fibers of type $I^s_k$. In the usual ansatz from Sen's algorithm, they are implemented with the conditions 
 $(m_1,m_2)=(0,1)$ which implies that the double cover obtained at weak coupling, using the Donagi-Wijnholt  ansatz, is:
\begin{equation}\label{ConifoldSingularity}
X:\quad \xi^2= a_{1,0}^2 + 4 \sigma a_{2,1}\,. 
\end{equation}
This admits conifold singularities in codimension-3 at  $\xi=a_{1,0}=\sigma=a_{2,1}=0$. Such singularities admit  crepant resolutions. 
However, these crepant resolutions are not compatible with the double cover: There are in fact two small resolutions which are exchanged by the orientifold action (see appendix \ref{AppConifold}). In contrast, the standard blow-up of the conifold is non-crepant.

\subsection{Solving the conifold problem for the Tate form of unitary gauge groups}\label{SolvingConifold}

We can solve the conifold problem appearing in the Donagi-Wijnholt ansatz of unitary gauge groups  in at least two different ways, each leading to a very different physical  picture at weak coupling. 
This is done by slightly modifying the original Donagi-Wijnholt ansatz.
\subsubsection{Replacing the conifolds by  quadric cones}
The conifold singularities can be removed from the weak coupling limit of F-theory with $SU(n)$ gauge groups by  supplementing the Donagi-Wijnholt ansatz with the additional condition  $a_1\rightarrow \epsilon a_1$. This gives the following  limit:
\begin{equation}\label{OrthogonalAnsatz}
\begin{cases}
a_1\    \rightarrow \epsilon a_1,\\
a_{3, m_3}  \rightarrow \epsilon a_{3,m_3}, \\
a_{4, m_4}\rightarrow \epsilon a_{4,m_2}, \\
a_{6, m_4}\rightarrow \epsilon^2 a_{6,m_6}\,.
\end{cases}
\end{equation}
This limit is in fact equivalent to the Donagi-Wijnholt original one, \eqref{DWLimitNonAbelian}, for most of Kodaira singularities\footnote{We are grateful to Andr\'es Collinucci for having pointed out this equivalence to us.}. In fact the equivalence breaks down when and only when $\sigma$ divides $a_2$ but not $a_1$. By inspecting table \ref{Table.TateForm}, we find that this circumstance is realized for the $SU(n)$ tower only, which is our focus here. 
The discriminant locus of the Weierstrass model  becomes 
\begin{equation}
\Delta \propto h^2 (h \sigma^s b_{6,s} -\sigma^{2m_4} a_{4,m_4}^2)\,\epsilon^2,
\end{equation}
with   $h= 4 \sigma^{m_2} a_{2,m_2}$. 
For symplectic gauge groups this gives a smooth double cover as $m_2=0$. For Tate form with $(m_1,m_2)=(1,1)$ or $(1,2)$ we recover the same geometry at weak coupling as with the original Donagi-Wijnholt ansatz. 
However, for  unitary gauge groups implemented using Tate forms, the conifold singularities are replaced by the quadric cone singularities as the double cover is now:
\begin{equation}\label{QuadricConeSing}
X:\quad \xi^2= 4 \sigma a_{2,1}. 
\end{equation}
Such singularities admit a crepant resolution compatible with the double cover. 
The orientifold locus splits into two components, namely $\xi=\sigma=0$ and $\xi=a_{2,1}=0$. One of these components happens to be the divisor on which the group is defined. It follows that we expect orthogonal groups at weak coupling. 

At leading order for a group $\SU(2n)$, the discriminant is 
\begin{equation}
\Delta \propto\epsilon^2  a_{2,1}^2  \sigma^{2n+2} (4 \sigma a_{2,1} b_{6,2n} - a_{4,n}^2),
\end{equation}
and for $\SU(2n+1)$
\begin{equation}
\Delta \propto \epsilon^2  a_{2,1}^2 \sigma^{2n+3} (4 a_{2,1}   b_{6,2n} -\sigma a_{4,n}^2).
\end{equation}
In this limit, the gauge group $\SU(k)$ seen at strong coupling becomes an orthogonal group $\SO(2k)$ at weak coupling as $\sigma=0$ is a component of the branch locus of the double cover. 
In the case of $\SU(k)$, the power $\sigma^{k+2}$ in the discriminant can be understood as composed of a factor $\sigma^2$ contributing to the orientifold and the leftover  $\sigma^{k}$ corresponding to $k$ bibranes\footnote{By a {\em bibrane} we mean a  brane-image-brane pair.} on top of the component $\sigma=0$ of the orientifold. This leads to a gauge group $\SO(2k)$. 
For $\SU(2n)$ there is also a singular brane 
\begin{equation}
D:4 \sigma a_{2,1} b_{6,2n} - a_{4,n}^2=0,
\end{equation}
 which becomes 
\begin{equation}
D:4 a_{2,1}   b_{6,2n} -\sigma a_{4,n}^2=0,
\end{equation}
 for $\SU(2n+1)$.

\subsubsection{Replacing the conifold by a suspended pinch point}\label{sppLimit}
For many applications, one would like to retrieve a unitary gauge group at weak coupling.  Preserving the unitary gauge group in presence of a $\mathbb{Z}_2$ orientifold  requires the  presence at  weak coupling of a stack of branes  not coinciding with its image stack under the orientifold involution. This happens in the conifold geometry because as  $\sigma=0$, we get  two divisors in the double cover and they are image of each other under the involution, namely:
\begin{equation}
\xi\pm a_1=\sigma=0.
\end{equation}
We can keep that property while modifying the singularity so that we can have a crepant resolution  compatible with the double cover. This would be the case of a double cover of the type 
\begin{equation}
X:\quad \xi^2=u^2+4\sigma^s v,\quad s=2 \quad \text{or}\quad  s=3.
\end{equation}
The simplest choice is $s=2$, which describes a suspended pinch point also known as a suspended Whitney umbrella. 
The suspended pinch point can be obtained by using the following modification of the Donagi-Wijnholt ansatz:
\begin{equation}\label{SppLimitNonAbelian}
\begin{cases}
a_{2,1}\rightarrow \epsilon  a_{2,1} +  \sigma a_{2,2},\\
a_{3, m_3}  \rightarrow \epsilon a_{3,m_3}, \\
a_{4, m_4}\rightarrow \epsilon a_{4,m_2}, \\
a_{6, m_4}\rightarrow \epsilon^2 a_{6,m_6}.
\end{cases}
\end{equation}
This leads to a double cover with the singularities of  a suspended pinch point:
\begin{equation}\label{SppSing}
X:\xi^2 = a_1^2 + 4\sigma^2 a_{2,2}.
\end{equation}
This double cover is singular along the codimension-two locus $\xi=a_1=\sigma=0$. 
A viable resolution in this case does exist: Indeed there are three small resolutions of the suspended pinch point, two of them are exchanged by the orientifold involution, and the third one, absent for the conifold, is orientifold invariant (see appendix \ref{AppSpp}). It turns out that the latter is equivalent to the standard ``large'' blow-up of the suspended pinch point singularity.

\subsection{ Generalities on weak coupling limits for Tate forms}\label{GeneralitiesWCL}

In F-theory, the non-Abelian part of the gauge group is completely controlled by the Kodaira type of the singular fiber over components of the discriminant locus and the monodromy around them. 
As we take the weak coupling limit, the discriminant locus  can be deformed and provides a very different spectrum of branes than what is seen in the full F-theory regime. 
When the weak coupling limit is an orientifold theory, a stack of branes gives a gauge group that can be symplectic, orthogonal or unitary depending on the behavior of the stack with respect to the orientifold symmetry. There are 3 cases to consider\footnote{We will not discuss the presence of  $U(1)$ factors.}:
\begin{enumerate}
\item Symplectic groups: the stack of D7 branes is supported on a divisor  invariant under the involution but not  pointwise invariant.
\item Orthogonal groups: the stack of D7 branes is supported on a divisor pointwise  invariant under the orientifold involution. 
\item Unitary groups:    the stack of D7 branes is supported on a divisor  which admits a distinct orientifold image.  
\end{enumerate}

Assuming that at weak coupling we have a stack of $r$ branes over the divisor $\underline{\Sigma}:\sigma=0$ in the base, the discriminant locus is of the following form at leading order in the deformation parameter of the weak coupling limit:
\begin{equation}
\Delta\propto  h^2 \sigma^r (\dots),
\end{equation}
where $h=0$ is the branch locus of the double cover that defines the orientifold theory:
\begin{equation}
X:\quad \xi^2=h, \quad f=-3 h^2 +\dots
\end{equation}
Here $h$ could contain factors of $\sigma$ as well. 
We denote by $h_\sigma$ the restriction of $h$ to the divisor  $\sigma=0$:
\begin{equation}
h_\sigma:=h\Big|_{\sigma=0}.
\end{equation}
The gauge group associated with the stack depends on $r$ and on  the factorization properties of $\xi^2=h_\sigma$. This is reviewed in table \ref{table.stackPB}.
\begin{table}[htb]
\begin{center}
\begin{tabular}{|c|c|}
\hline 
Property of $h_\sigma$ & Gauge group \\
\hline
$h_\sigma$  {is identically zero} & $\SO(2r)$ \\
\hline
$h_\sigma$   {is  a perfect square} &    $\SU(r)$ \\
\hline
$h_\sigma$   {is  not a  perfect square} & $\Sp\Big(\big\lfloor \frac{r}{2}\big\rfloor\Big)$\\
\hline
\end{tabular}
\end{center}
\caption{ The discriminant locus at weak coupling is $\Delta\propto  h^2 \sigma^r (\cdots)$ and  $h_\sigma:=h\Big|_{\sigma=0}.$}\label{table.stackPB}
\end{table}

Supposing that 
$\Delta\propto  h^2 \sigma^r (\cdots)$, we get a unitary, orthogonal or symplectic gauge group  by the following ans\"atze for $h$:
\begin{subequations}\label{wclr}
\begin{align}
 h:= u^2 + \sigma^k  v & \quad \Longrightarrow\quad  \SU(r), \\
h := \sigma^k v & \quad \Longrightarrow\quad  \SO(2r),\\
 h\ \text{generic} &  \quad \Longrightarrow \quad \Sp\Big(\big\lfloor \tfrac{r}{2}\big\rfloor\Big),
\end{align}
\end{subequations}
where $\lfloor x \rfloor$ denotes the integral part of $x$.

\subsection*{Symplectic case: $\xi^2=h$} 
The double cover $\xi^2=h$ with $h$ general leads to a  symplectic gauge group and it is smooth. This is the generic case. 

\subsection*{Orthogonal case: $\xi^2=\sigma^k v$}
The double cover $\xi^2=\sigma^k v$ leads to an orthogonal gauge group over $\sigma=0$. The rank of the orthogonal group depends on the multiplicity of the discriminant locus. The double cover is singular whenever $k>0$. The singularities are in codimension-1  if $k>1$  and codimension-2 if $k=1$. They admit an admissible crepant resolution for $0 \leq k\leq 3$. This is reviewed in appendix \ref{AdmissibleResolutions}.

\subsection*{Unitary case: $\xi^2=u^2+\sigma^k v$}
The double cover  $X: \xi^2=u^2+\sigma^k v$ will give a unitary gauge group  for the stack over $\sigma=0$ since the divisor  $\sigma=0$ of the base pulls-back to two distinct divisors  in the double cover, namely $D_{\sigma_\pm}: \sigma=\xi\pm u=0$. The group at weak coupling will be $\SU(r)$ if the leading term of the  discriminant is of the type $\Delta\propto h^2 \sigma^r (\cdots)$ where $h=u^2+\sigma^k v$.  The question is whether $X$ admits an admissible  crepant resolution. 
If $k>1$, the singularity are in codimension-2. If $k=1$, the singularity jumps to codimention-3 and corresponds to conifold singularities. Such conifold singularities  do not admit a crepant resolution compatible with the involution of the double cover.   If $k=2$ or $k=3$, the double cover admits an admissible crepant resolution. 

\subsection{Brane spectrum at weak coupling  for the DW and the orthogonal limit}\label{BraneContentDW}
For a weak coupling limit compatible with Sen's limit, the  discriminant locus at leading order is 
\begin{equation}\label{LeadingDiscriminantDW}
\Delta\propto\epsilon^2 h^2 b_8= \epsilon^2 h^2 (h\sigma^s  b_{6, s}-\sigma^{2q} b_{4,q}^2)+O(\epsilon^3).
\end{equation}
Taking $r=min(s, 2q)$, we have the following brane content: 
$$
\begin{cases}
 \text{an orientifold at $h=0$,}\\
\text{a stack of $r$ branes over $\sigma=0$,}\\
\text{a singular brane $ (h\sigma^{s-r}  b_{6, s}-\sigma^{2q-r} b_{4,q}^2)=0$.}
\end{cases}
$$

\begin{table}[htb]
\begin{tabular}{|c|c|c|r c|}
\hline
Type & $j$ &F theory  &  \multicolumn{2}{c|}{DW limit / Quadric cone } \\
 \hline
$I_2$ & $\infty$ &  $\SU(2)$  & \multicolumn{2}{c|}{$\SU(2)$ } \\
\hline
$I_3^{ns}$ & $\infty$ &  $\Sp(1)$ & \multicolumn{2}{c|}{ $\Sp(1)$}   \\
\hline
$ I^s_3$ & $\infty$ & $\SU(3)$  &  \multicolumn{1}{|c|}{\small $\SU(3)$} & \multicolumn{1}{|c|}{\small $\SO(6)$}
\\
\hline
$I^{ns}_{2n}$ &$\infty$ &  $\Sp(n)$ &  \multicolumn{2}{c|}{$\Sp(n)$ } \\
\hline
$I^s_{2n}$ & $\infty$ & $\SU(2n)$  &  
 \multicolumn{1}{|c|}{\small $\SU(2n)$} & \multicolumn{1}{|c|}{\small $\SO(4n)$}\\
\hline
$I^{ns}_{2n+1}$ & $\infty$ & $\Sp(n)$ & \multicolumn{2}{c|}{$\Sp(n)$ }\\
\hline
$I^{s}_{2n+1}$ & $\infty$ &\small $\SU(2n+1)$  & 
 \multicolumn{1}{|c|}{\small $\SU(2n+1)$} & \multicolumn{1}{|c|}{\small $\SO(4n+2)$}\\
\hline
$II$ & $0$ & $-$  & \multicolumn{2}{c|}{$\SO(4)$ } \\
\hline
$III$ & \footnotesize $1728$ & $\SU(2)$ & \multicolumn{2}{c|}{ $\SO(4)$ }\\
\hline
$IV^{ns}$ & $0$ & $\Sp(1)$  & \multicolumn{2}{c|}{ $\SO(6)$ } \\
\hline
$IV^s$ & $0$ & $\SU(3)$ & \multicolumn{2}{c|}{  $\SO(6)$ }\\
\hline
$I^{*ns}_0$ & $\infty$ & $G_2$ & \multicolumn{2}{c|}{ $\SO(8)$ }  \\
\hline
$I^{*ss}_0$ & $\infty$ & $\SO(7)$ & \multicolumn{2}{c|}{ $\SO(8)$ 
}\\
\hline
$I^{*s}_0$ & $\infty$ & $\SO(8)$ & \multicolumn{2}{c|}{ $\SO(8)$ } \\
\hline
$I^{*ns}_1$ &$\infty$ & $\SO(9)$ & \multicolumn{2}{c|}{ $\SO(10)$ } \\
\hline
$I^{*s}_1$ & $\infty$ &$\SO(10)$ &\multicolumn{2}{c|}{  $\SO(10)$ }\\
\hline
$I^{*ns}_2$ &$\infty$ & $\SO(11)$ & \multicolumn{2}{c|}{ $\SO(12)$ }\\
\hline
$I^{*s}_2$ & $\infty$ &$\SO(12)$ &  \multicolumn{2}{c|}{ $\SO(12)$}\\
\hline
$I^{*ns}_{2n-3}$ &$\infty$ & \small $\SO(4n+1)$& \multicolumn{2}{c|}{\small $\SO(4n+2)$ } \\
 \hline
$I^{*s}_{2n-3}$ & $\infty$ & \small  $\SO(4n+2)$ & \multicolumn{2}{c|}{\footnotesize$SO(4n+2)$ } \\
\hline
$I^{*ns}_{2n-2}$ &$\infty$ & \small $SO(4n+3)$ & \multicolumn{2}{c|}{\small $\SO(4n+4)$ }   \\
\hline
$I^{*s}_{2n-2}$ & $\infty$ &\small $SO(4n+4)$  & \multicolumn{2}{c|}{\small  $\SO(4n+4)$ } \\
\hline
$IV^{*ns}$ &$0$ & $F_4$ & \multicolumn{2}{c|}{ $\SO(12)$ }  \\
\hline
$IV^{*s}$ & $0$ &$E_6$  &  \multicolumn{2}{c|}{ $\SO(12)$ } \\
\hline
$III^*$ &\footnotesize $1728$ &$E_7$ & \multicolumn{2}{c|}{ $\SO(12)$ } \\
\hline
$II^*$ &$0$ & $E_8$ & \multicolumn{2}{c|}{   $\SO(14)$ } \\
\hline
\end{tabular}
\caption{Groups at weak coupling using the DW ansatz or the orthogonal ansatz.  There is a difference only for fibers $I^s_k$ ($k>2$) which gives $\SU(k)$ in F-theory and also in the DW limit, but $\SO(2k)$ in the orthogonal limit.
\label{table.groups.DW} }
\end{table}

Using the rules explained in  \eqref{wclr}, we compute in table \ref{table.groups.DW} the gauge group in the weak coupling limit for each Weierstrass model with a given singularity over a divisor $\sigma=0$ implemented by the Tate form. We use and compare the Donagi-Wijnholt ansatz \eqref{DWLimitNonAbelian} and the orthogonal ansatz \eqref{OrthogonalAnsatz} to take the weak coupling limit. By definition of weak coupling limit, the $j$-invariant is generically going to infinity over the base so that the string coupling goes to zero. However, the $j$-invariant can still be finite over certain sub-loci of the base. A natural question is whether the coupling is actually small over the divisor $\sigma=0$ on which the gauge group is implemented by the Tate form. 

The appearance of orthogonal gauge groups in the Donagi-Wijnholt weak coupling limit of exceptional singularities (II, III, IV and their duals) deserves some comment. In particular, the gauge groups of the E series fall in this category.  These cases are indeed special with respect to all the others, as the value of the $j$-invariant is finite on the locus $\sigma=0$ where the singularity is implemented in F-theory. This fact makes open strings ending there intrinsically strongly coupled and therefore the presence of the listed gauge symmetries is questionable. We have deduced them first by looking at the order of vanishing of the $b_8$ factor in the leading term of the discriminant \eqref{LeadingDiscriminantDW}; Second by remembering that for these cases $h=\sigma^2(a_{1,1}^2+4a_{2,2})$ and thus $\sigma=0$ is a branch of the O7-plane $h=0$. The same situation actually arises when taking the orthogonal limit of the fibers $I_2$ and $I_3^s$. This is not surprising, because the orthogonal limit makes the stack and the image-stack degenerate onto the O7-plane and the $I_2, I_3^s$  fibers have special enhancements on the orientifold, namely to $III, IV^s$ respectively.

The results we derive in table \ref{table.groups.DW} for the gauge groups at weak coupling do not actually depend on the dimension of the F-theory compactification manifold and are purely based on geometrical facts. Their string interpretation at weak coupling, though, is  puzzling. However, there is one case where  we do have a reliable open string picture to compare those predictions with, and this is when we compactify F-theory on K3. Here, indeed, 7-branes are not intersecting and we have the technology of the so called ``A-B-C'' branes \cite{Johansen:1996am,Gaberdiel:1997ud} at hand to identify the BPS states responsible for the gauge symmetry. In particular, the group E$_8$ is realized \cite{Johansen:1996am} via the bound state $A^7BC^2$, where the group $BC$ has the monodromy of an O7-plane. By higgsing one of the C-branes, we immediately realize the appearance of the $SO(14)$ group in perturbation theory using only the  open fundamental strings ending on the A-branes, possibly winding around the O7-plane $BC$ (this is the result of the perturbative enhancement of the manifest $SU(7)$ group). We can repeat the same reasoning for E$_7$ realized as $A^6BC^2$ and we deduce the perturbative group $SO(12)$ in agreement with what we found here. 

However, the agreement does not seem to exist for E$_6$, realized as $A^5BC^2$, where a perturbative subgroup $SO(10)$ appears, rather then the $SO(12)$ deduced from the rank of the discriminant. This might be due to the fact that the two weak coupling limits to type IIB on $T^2/Z_2$ we are comparing are inequivalent. Moreover, for compactifications of F-theory on higher dimensional manifolds, the interpretation in terms of A-B-C branes is no longer possible in a globally well defined way, due to 7-brane intersections. Yet the computation done for table \ref{table.groups.DW} is still valid, as it does not depend on the dimension. Therefore a comparison analogous to the one above can only be done locally\footnote{See \cite{Bonora:2010bu}, where this kind of local analysis has been used to identify the string-junction states  in the adjoint of non-simply-laced gauge groups, which are not realizable in F-theory on K3 due to the absence of monodromies.} (away from the loci of symmetry enhancement). We hope to come back to these issues in a future work.

\section{The type IIB Calabi-Yau threefold:  Suspended pinch point case}\label{resCY3spp}

In this section we discuss the smooth background where type IIB strings are actually leaving at weak coupling. We address here the case where the weak coupling limit gives us a singular geometry of the suspended pinch point type for the type IIB Calabi-Yau threefold (see section \ref{sppLimit}). We first provide a mathematical description of the resolution procedure and afterwards discuss the physics of the ensuing smooth geometry. 

From now on we restrict our attention to elliptic fibration which are Calabi-Yau
and thus impose $c_1(B )=c_1(\mathscr{L} )$. We will write $c_1$ for $c_1(B)$. We also restrict to 3 the complex dimensions of the base.

\subsection{ Description of the resolution}\label{MathematicalProperties}
 
In order to blow-up the singular Calabi-Yau threefold \eqref{SppSing} along the curve $\sigma=\xi=a_1=0$, we proceed using toric methods (see \cite{Collinucci:2010gz,Collinucci:2012as} for the same methods applied to elliptic fourfolds). We first add to the ambient four-dimensional manifold two homogeneous coordinates, $s$ and $a$, together with two new equations. The singular Calabi-Yau threefold will thus be expressed as the following system of equations
\begin{equation}
X_3:
\begin{cases}
\xi^2&= a^2+s^2\,a_{2,2}\\ 
s&=\sigma \\
 a&=a_1\,,
\end{cases}
\end{equation}
where we have reabsorbed the irrelevant factor of $4$ in $a_{2,2}$. We then introduce yet another homogeneous coordinate, $w$, together with the following projective weight assignment
 \begin{equation}\label{projectWeights}
\begin{array}{cccc}s&a&\xi&w\\ \hline 1&1&1&-1\end{array}\,.
\end{equation}
This will produce an element in the Stanley-Reisner ideal of the ambient six-fold of the form $\xi\,s\,a$: Now these three coordinates cannot simultaneously vanish. The resolved Calabi-Yau threefold will then appear as the following complete intersection in the ambient six-dimensional manifold
\begin{equation}\label{resSPP}
\tilde{X}_3  :
\begin{cases}
\xi^2        &=a^2+s^2  a_{2,2}\\ 
ws  &=\sigma \\ 
w a          &=a_1 
\end{cases}
\end{equation}
This is now a perfectly smooth manifold, still invariant under the orientifold involution\footnote{In this case one can alternatively define the orientifold involution by reversing the sign of $s,a,w$ at the same time. This is clearly gauge-equivalent to sending $\xi\to-\xi$.}. This is where type IIB strings are supposed to live at weak coupling. Let us study some of the properties of this new geometry. 

The stack of D7-branes and its orientifold image are described by the following systems
\begin{equation}
{\rm D7}_\pm:
\begin{cases}
\xi &= \pm1\\ 
s&=0 \\ 
\sigma&=0\\ 
w&=a_1
\end{cases}
\end{equation}
where we have fixed the gauge associated to projective scaling \eqref{projectWeights} by putting $a=1$. While in the singular geometry the D7-stack was intersecting its image in a curve, in the resolved geometry they clearly do not touch each other. They also do not touch the orientifold plane, which is the surface
\begin{equation}
\label{O-plane}
{\rm O7}:
\begin{cases} 
\xi&=0\\ 
a^2&=-a_{2,2}\\ 
w\,a&=a_1 \\ 
w&=\sigma
\end{cases}
\end{equation}
where again we have conveniently fixed the gauge by putting $s=1$. We easily recognize from \eqref{O-plane} a surface wrapping the divisor $\{a_1^2+\sigma^2a_{2,2}=0\}$ of the original base $B_3$, as it should be. On the other hand a new divisor appears, which replaces the former curve of singularities. This is the exceptional divisor
\begin{equation}\label{ExceptionalSpp}
E:
\begin{cases}
w&=0\\ 
a_1&=0 \\ 
\sigma&=0\\ 
\xi^2&=a^2+s^2\,a_{2,2}
\end{cases}
\end{equation}
which has the geometry of an orientifold-invariant. It corresponds to a   $\mathbb{P}^1$ with homogeneous coordinates $a,s$, fibered over the locus $\{\sigma=0\}\cap\{a_1=0\}\subset B_3$. It interpolates between the D7-stack and the O-plane by intersecting both of them respectively in the following \emph{distinct} points of the fiber $\mathbb{P}^1_{s a}$: $(s,a)=(0,1)$ and $(s,a)=(1,p)$ such that $p^2=-a_{2,2}$.

 The base $\tilde{B_3}$ onto which the resolved Calabi-Yau threefold $\tilde{X_3}$ projects is  the original base $B_3$ blown-up along the curve $\{\sigma=0\}\cap\{a_1=0\}$. In other words, $\tilde{X}_3$ can be seen as the double cover of the blown-up base $\tilde{B_3}$ defined by the set of equations
\begin{equation}\label{resolvedBase}
\tilde{B}_3:
\begin{cases}
w\,s&=\sigma \\ 
w a&=a_1\,.\end{cases}
\end{equation}
in an ambient five-dimensional manifold given by adding to $B_3$ the following three homogeneous coordinates with a projective weight assignment
\begin{equation}\label{projectWeightsBase}
\begin{array}{ccc}s&a&w\\ \hline 1&1&-1\end{array}\,.
\end{equation}
Hence $s$ and $a$ cannot vanish at the same time and the exceptional locus is $\mathbb{P}^1_{s a}$ fibered over the curve $\{\sigma=a_1=0\}$.

Before discussing the features of the suspended pinch point from the physics perspective, a comment is in order. The first Chern class of the new base can be expressed in terms of the one of the old base as follows
\begin{equation}\label{nonCrepant}
c_1(\tilde{B}_3)=\phi^*c_1(B_3)-E\,,
\end{equation}
where $\phi:\tilde{B}_3\to B_3$ is the blow-down map and $E$ is the class of $\{w=0\}$. This map is not crepant if the corresponding map between the double covers is. Therefore it may happen that starting from a spin base we end up having a non-spin base after the blow-up. Let us show this circumstance in a concrete example. Take $B_3=\mathbb{P}^3$, which is clearly spin, and call $H$ its hyperplane class. We want to blow up this manifold along the curve $\{x_1=0\}\cap\{a_1=0\}$ where $x_1$ is one of the homogeneous coordinates of $\mathbb{P}^3$ and $a_1$ is a polynomial in $\mathbb{P}^3$ of class $4H$. Hence, the blown-up threefold $\tilde{B}_3$ will be given by the hypersurface of class $4H$
\begin{equation}\label{EqBTilde}
wa=a_1(wx_1,x_2,x_3,x_4)\,,
\end{equation}
in the ambient four-dimensional toric manifold defined by 
\begin{equation}
X_{4}:\begin{array}{cccccc}x_1&x_2&x_3&x_4&a&w\\ \hline 1&1&1&1&4&0\\  1&0&0&0&1&-1\end{array}
\end{equation}
The Stanley-Reisner ideal of this ambient variety is made by the two elements $x_1\,a$ and $x_2\,x_3\,x_4\,w$. We now want to prove that $\tilde{B}_3$ is non-spin. To this end, we have to find at least one 2-cycle on which the first Chern class integrates to an odd number. It turns out that this 2-cycle is not manifest in a generic point of the moduli space of $\tilde{B}_3$. But if we constrain the complex structure moduli in a suitable way we are able to write this 2-cycle as a set of three algebraic equations in the ambient fourfold which automatically satisfy \eqref{EqBTilde} (see \cite{Braun:2011zm,Collinucci:2009uh}, where similar techniques are used for elliptic fourfolds). One possible constraint which works is the following
\begin{equation}
a_1=x_2\,\hat{a}_1+x_3\,\tilde{a}_1\,,
\end{equation}
where $\hat{a}_1,\tilde{a}_1$ are both polynomials of degree $3H$. Now consider the following non-complete intersection 2-cycle
\begin{equation}
C_{(2)}:\left\{\begin{array}{rcl}x_2&=&0\\ x_3&=&0 \\ w&=&0\,.\end{array}\right.
\end{equation}
The integral of the first Chern class of $\tilde{B}_3$ on $C_{(2)}$ is 
\begin{equation}
\int_{C_{(2)}}c_1(\tilde{B}_3)=\int_{X_4}(4H-E)\,E\,H^2\;=\;1\,.
\end{equation}
Here we have used the following intersection numbers of the ambient fourfold
\begin{equation}
H^4=\frac{1}{4}\;,\;H^3E=0\;,\;H^2E^2=-1\;,\;HE^3=-5\;,\;E^4=-21\,.
\end{equation}

\subsection{Physical properties}\label{PhysicalProperties}

The hope is now to use the resolved Calabi-Yau geometry \eqref{resSPP} as the target space for weakly coupled type IIB strings. Their perturbative and non-perturbative dynamics should effectively reproduce the strongly coupled physics of the corresponding F-theory configuration at each codimension in the base: Gauge degrees of freedom at codimension one, matter degrees of freedom at codimension two and Yukawa-type interactions at codimension three, in the spirit of the paradagm of model building in F-theory \cite{Vafa:2009se}. In order to see to what extent the new geometry we have obtained realizes all that, let us specify two unitary F-theory configurations and work with them throughout the rest of the section. We choose an $SU(4)$ and an $SU(5)$ model, since they display different properties of Yukawa couplings, as it will be clear in a moment.  Also, they are the lowest rank representatives of the even and odd unitary series with more familiar enhancements:  $SU(2)$ enhances to the Kodaira singularity III and $SU(3)$ to IV$^s$ on the O7-plane \cite{GM1}.

\subsubsection{SU(4)}
In order to identify the relevant objects at weak coupling, we have to study the behavior of the discriminant of the elliptic fibration as $\epsilon$ goes to $0$. Let us do that for both the Donagi-Wijnholt limit \eqref{DWLimitNonAbelian} and the new limit \eqref{SppLimitNonAbelian} and compare the two situations. 
 We use the Tate form of $SU(4)$ as expressed in table \ref{Table.TateForm} for a fiber of type $I^s_4$: 
That means $(a_1,a_2, a_3, a_4, a_6)$ have multiplicity $(0,1,2,2,4)$ along $\sigma=0$
\begin{equation}
a_2=\sigma a_{2,1}, \quad a_3=a_{3,2}\sigma^2, \quad a_4=a_{4,2} \sigma^2, \quad a_6=a_{6,4}\sigma^4.
\end{equation}
By applying the Donagi-Wijnholt limit, we obtain
\begin{equation}\label{LeadingSen}
\Delta|_{\rm DW}\sim\left[a_1^2+4\sigma a_{2,1}\right]^2\,\sigma^4\,\left[a_{4,2}^2+a_1(a_{3,2}a_{4,2}-a_1a_{6,4})-\sigma a_{2,1}b_{6,4}\right]\;\epsilon^2\,,
\end{equation}
The discriminant is factorized into three pieces whose vanishing respectively represents the O7-plane, the D7-stack hosting the $SU(4)$ gauge group and the Whitney umbrella D7-brane. We have two relevant matter curves here\footnote{To be more precise, one has to look at the full discriminant, which has the form $\sigma^4\,I_1$, the latter factor being a recombined 7-brane with $U(1)$ gauge group, responsible for canceling the tadpole. On $\sigma=0$, $I_1$ factorized into two branches which are the matter curves discussed above.\label{FullDelta}}:  
\begin{equation}
{\rm\bf 6}\;:\quad\begin{cases}\sigma=0\\ a_1=0\end{cases}\qquad,\qquad{\rm\bf 4}\;:\quad\begin{cases}
\sigma=0\\ a_{4,2}^2+a_1(a_{3,2}a_{4,2}-a_1a_{6,4})=0\,.\end{cases}
\end{equation}
 This is also consistent with the result of \cite{GM1}.
The first hosts matter fields transforming in the {\bf 6}, the antisymmetric representation of $SU(4)$, which originates from the symmetry enhancement to the $SO(7)$ group\footnote{This enhancement is slightly more generic than the $SO(8)$ we would expect from string theory, as the latter would require an additional factorization condition, which kills the monodromies. As usual, the ${\rm\bf 6}$ of $SU(4)$ arises from the decomposition of the adjoint of $SO(7)$, i.e. ${\rm\bf 21}={\rm\bf 15}+{\rm\bf 6}$.}. The second accommodates matter fields transforming in the {\bf 4}, the fundamental representation of $SU(4)$, which originates from the symmetry enhancement to $SU(5)$.  These two matter curves intersect in points of the D7-stack worldvolume where the further symmetry enhancement  is expected to occur to accommodate the coupling
\begin{equation}\label{YukawaSU4}
{\rm\bf 6}\,\bar{\rm\bf 4}\,\bar{\rm\bf 4}:\begin{cases}\sigma=0\\ a_1=0\\  a_{4,2}=0 \end{cases}
\end{equation}
Actually, being \emph{all} the enhancements of $SU(4)$ to either the unitary or the orthogonal type of group, we should have good chances of finding an effective description of this physics within the realm of weakly coupled type IIB string theory, possibly including D-instanton effects in order to reproduce certain perturbatively forbidden Yukawa interactions. The situation is different for $SU(5)$, as we will see shortly.

Let us now use the new limit \eqref{SppLimitNonAbelian} and expand the discriminant accordingly
\begin{equation}\label{LeadingSpp}
\Delta|_{\rm spp}\sim  \left[a_1^2+\sigma^2a_{2,2}\right]^2\,\sigma^4\,\left[a_{4,2}^2+a_1(a_{3,2}a_{4,2}-a_1a_{6,4})-\sigma^2a_{2,2}b_{6,4}\right]\;\epsilon^2\,,
\end{equation}
As one immediately sees, all the relevant features of the $SU(4)$ model are kept intact, since the pattern of intersections and enhancements are unchanged. However, what we should really be looking at is the discriminant after the blow-up of $B_3$, namely the proper transform of \eqref{LeadingSpp}. Recall that the proper transform of the polynomial defining the D7-stack $\sigma$ is $s$ and the class of the divisor $\{s=0\}$ is $\mathcal{D}-E$ with $\mathcal{D}$ the class of $\{\sigma=0\}$ and $E$ the exceptional class. The proper transform of $a_1$ is $a$, of class given by \eqref{nonCrepant}.  Thus we have 
\begin{equation}\label{LeadingSPPpt}
\hat{\Delta}|_{\rm spp}\sim  \left[a^2+s^2a_{2,2}\right]^2\,s^4\,\left[a_{4,2}^2+a w(a_{3,2}a_{4,2}-a w a_{6,4})-w^2s^2a_{2,2}b_{6,4}\right]\;\epsilon^2\,,
\end{equation}
Notice that we are providing here a smooth target space for type IIB strings at weak coupling: This is the resolved Calabi-Yau threefold defined in \eqref{resSPP}. This manifold projects onto the base \eqref{resolvedBase}, which is  \emph{different} from the base of the elliptic fibration we started with (it is connected to it by blow-down). As a consequence, \eqref{LeadingSPPpt} is not to be regarded as the discriminant of a Calabi-Yau elliptic fibration. Indeed, the $U(1)$ D7-brane which is not touched by the blow-up and is supposed to cancel the total charge of O7-plane and D7-stack (last piece in \eqref{LeadingSPPpt}) has no longer the right degree to do that. 
This means that to preserve the D7 tadpole in the type IIB theory on the resolved $\tilde{X}$, the spectrum of branes is not given by the proper transform of the discriminant. There is an  additional contribution to the D7  charge  required to satisfy the  D7 tadpole. This contribution is 
$4 E$. Since the D7 tadpole is equivalent to the Calabi-Yau condition for an elliptic fibration that would admit $\tilde{X}$ as its weak coupling limit, we can also consider the following scenario.

 We will now construct a Calabi-Yau elliptic fibration  with  base $\tilde{B}_3$. We also impose that at weak coupling it admits the double cover $\tilde{X}$ as in equation \eqref{resSPP} and  a $\SU(4)$ stack  over $\tilde{\mathcal{D}}:s=0$, the proper transform of $\mathcal{D}$.  However the $U(1)$ brane will  not coincide with the one in \eqref{LeadingSpp}. But it will have the proper degree to ensure that the elliptic fibration is Calabi-Yau and therefore automatically satisfies the  D7  tadpole.   In order to impose the proper stack, we use the Tate form for a fiber $I^s_4$ over the divisor  $\tilde{\mathcal D}=\mathcal{D}-E: s=0$. 
The coefficients of the Tate form are given by:
\begin{equation}
\tilde{a}_1=\tilde a_{1,0}, \quad \tilde{a}_2 =\tilde a_{2,1} s + \tilde a_{2,2} s^2 , \quad \tilde{a}_3= \tilde a_{3,2} s^2, \quad\tilde{a}_4=\tilde a_{4,2} s^2, \quad \tilde{a}_6=\tilde a_{6,4} s^4,
\end{equation}
where $\tilde{a}_p$ is by definition a section of a line bundle of class $p c_1(\tilde{B}_3)=p(c_1-E)$ and therefore  $a_{p,q}$ is a section of  a line bundle of class $p c_1(\tilde{B}_3)-q\tilde{\mathcal{D}}=p c_1-q \mathcal{D}-(p-q)E$. The coefficient $a_2$ has a deformation $a_{2,2}$ compatible with a fiber $I^s_4$ and useful to define a suspended pinch point weak coupling limit. If we take $a_{2,1}$ identically vanishing, the DW-weak coupling limit will coincide with the suspended pinch point weak coupling limit. If we keep $\tilde a_{2,1}$, then we will rescale it as $\tilde a_{2,1}\rightarrow \epsilon \tilde a_{2,1}$ in the weak coupling limit. 
Both ways, we get a spp  $\xi^2= \tilde a_1^2 + 4 \tilde a_{2,2}s^2$ in the weak coupling limit. It will coincide with the double cover of the resolved $\tilde{X}$ if we impose $\tilde a_1=a$ and $\tilde a_{2,2}=a_{2,2}$.  
The other Tate coefficients can be realized as follows:
 \begin{equation}\label{PT}
\begin{array}{lllll}
\ a_1&\longrightarrow&a&&c_1-E\\ \\
 a_{2,1}&\longrightarrow& s\, a_{2,1}^{(1,0)} + a\, a_{2,1}^{(0,1)}  & &2 c_1-\mathcal{D}-E\\ \\
a_{2,2}&\longrightarrow&a_{2,2}& &2 c_1-2\mathcal{D}\\ \\
a_{3,2}&\longrightarrow&s\,a_{3,2}^{(1,0)}+a\,a_{3,2}^{(0,1)}&&3 
  c_1-2\mathcal{D}-E\\  \\
a_{4,2}&\longrightarrow&s^2\,a_{4,2}^{(2,0)}+s\,a\,a_{4,2}^{(1,1)}+a^2\,a_{4,2}^{(0,2)}&&4  c_1-2\mathcal{D}-2E \\ \\ a_{6,4}&\longrightarrow&s^2\,a_{6,4}^{(2,0)}+s\,a\,a_{6,4}^{(1,1)}+a^2\,a_{6,4}^{(0,2)}&&6 c_1-4\mathcal{D}-2E\;,  
\end{array}
\end{equation}
where the last column indicates the divisor class of $\tilde a_{p,q}$. 
The superscript $(n,m)$ just means that we have $n$ powers of $s$ and $m$ powers of $a$ in front of that coefficient. 
Depending on the details of the model some of them may identically vanish.

By taking the DW limit of this new fibration we get the following discriminant at leading order:
\begin{equation}\label{LeadingSppPT}
\Delta|_{\rm spp}\sim\left[a^2+s^2a_{2,2}\right]^2\,s^4\,\left\{a^4\,\left[(a^{(0,2)}_{4,2})^2+a^{(0,1)}_{3,2}a^{(0,2)}_{4,2}-a^{(0,2)}_{6,4}\right]+\mathcal{O}(s)\right\}\;\epsilon^2\,.
\end{equation}
The last bracket in \eqref{LeadingSppPT} defines the recombined, tadpole-canceling $U(1)$ D7-brane.  The geometry of the Calabi-Yau uplift $\tilde{Y}$
 successfully reproduces the $SU(4)$ gauge degrees of freedom on the D7-stack at $s=0$. Let us now go to codimension two. From \eqref{LeadingSppPT} we deduce the two matter curves
\begin{equation}
\cancel{{\rm\bf 6}\;:\quad\begin{cases}s=0\\ a=0\end{cases}}\qquad,\qquad{\rm\bf 4}\;:\quad\begin{cases}s=0\\ (a^{(0,2)}_{4,2})^2+a^{(0,1)}_{3,2}a^{(0,2)}_{4,2}-a^{(0,2)}_{6,4}=0\,.\end{cases}
\end{equation}
 We obtain again a matter curve accommodating fields in the fundamental representation ${\rm\bf 4}$.   However,  matter arising from the intersection of the D7-stack with the O7-plane loses its support. This curve of intersection is in fact precisely the center of our blow-up. 
 This is an issue for those models, like F-theory inspired GUTs \cite{Donagi:2008ca, Beasley:2008dc}, which  require such a curve  for phenomenological reasons. We do not have a convincing solution yet.

One may think that the ${\rm\bf 6}$ is ``higgsed'' and split in two parts. Fix a divisor which intersects both the O7-plane and the D7-stack: The two branches of the ${\rm\bf 6}$-matter curve would then arise from the fixed divisor intersecting the O7-plane on one hand and the D7-stack on the other. This would then imply having a 7-brane source wrapping the fixed divisor, which from the structure of the discriminant \eqref{LeadingSppPT} is generically not the case for $SU(4)$. 
Therefore, one is led to constrain the complex structure of the fourfold in order to achieve this further factorization of the discriminant. Note, however, that a natural candidate divisor which in our resolved geometry interpolates between the D7-stack and the O7-plane is the exceptional divisor \eqref{ExceptionalSpp}. One can easily work out the same analysis for $SU(2k)$ with $k>2$, and realize that for $k\geq5$ factors of $w$ start appearing  in eq. \eqref{LeadingSppPT}. They are needed for consistency with 7-brane tadpole cancellation, expressed by the fundamental relation 
\begin{equation}\label{7-braneTadpole}
[\Delta]_{\rm spp}=12\,c_1(\tilde{B}_3)=12\,(c_1(B_3)-E)\,,
\end{equation}
where we used eq. \eqref{nonCrepant} and $[\Delta]$ means the divisor class of the discriminant. This provides, at least for high rank gauge groups, a natural playground for putting on solid base the conjecture of the ${\rm\bf 6}$-matter curve higgsing. We have not explored the details of this idea yet. We hope to come back soon to the problem in a future publication. Meanwhile, in section \ref{secondApproach}, we will propose a slight modification of limit \eqref{SppLimitNonAbelian}, which, though having a smaller range of validity, does not affect the matter curve in question.

\subsubsection{SU(5)}
Let us now come to the odd series of unitary groups and focus on the basic case of $SU(5)$.

By applying the Donagi-Wijnholt limit \eqref{DWLimitNonAbelian} to the discriminant, we obtain
\begin{equation}\label{LeadingSenOdd}
\Delta|_{\rm DW}\sim\left[a_1^2+4\sigma a_{2,1}\right]^2\,\sigma^5\,\left[a_1(a_{3,2}a_{4,3}-a_1a_{6,5})-a_{2,1}a_{3,2}^2+\mathcal{O}(\sigma)\right]\;\epsilon^2\,,
\end{equation}
The relevant matter curves now are (same observation as in footnote \ref{FullDelta} applies here)
\begin{equation}
{\rm\bf 10}\;:\quad
\begin{cases}
\sigma=0\\ 
a_1=0
\end{cases}
\qquad,\qquad{\rm\bf 5}\;:\quad 
\begin{cases}
\sigma=0\\ 
a_1(a_{3,2}a_{4,3}-a_1a_{6,5})-a_{2,1}a_{3,2}^2=0
\end{cases}
\end{equation}
Matter in the antisymmetric representation ${\rm\bf 10}$ arises from the enhancement to $SO(10)$ along the O7-plane, while matter in the fundamental representation ${\rm\bf 5}$ arises from the enhancement to $SU(6)$ along the remainder, invariant D7-brane. Here we readily see that there are two kinds of Yukawa couplings arising from the intersection matter curves\cite{Marsano:2011hv}
\begin{equation}\label{YukawaSU5}
{\rm\bf 10}\,\bar{\rm\bf 5}\,\bar{\rm\bf 5}\;:\quad
\begin{cases}
\sigma=0\\ a_1=0\\ a_{3,2}=0
\end{cases}
\qquad,\qquad{\rm\bf 10}\,{\rm\bf 10}\,{\rm\bf 5}\;:
\begin{cases}
\sigma=0\\ 
a_1=0\\ 
a_{2,1}=0\,.
\end{cases}
\end{equation}
The first is analogous to the one for $SU(4)$ \eqref{YukawaSU4}, and it comes from the enhancement to $SO(12)$. The second is peculiar of $SU(5)$ and it is exactly localized on the conifold points of the singular geometry \eqref{ConifoldSingularity}. It comes from the enhancement to `E$_6$'.  Hence, the enhancement to an exceptional gauge group occurring for $SU(5)$ tells us that in this case the physics hidden in the conifold points may well be intrinsically strongly coupled and thus impossible to reproduce just by means of fundamental strings\footnote{One would need string-junctions, which are believed to be the `fundamental'' objects of F-theory.}.

If we now use the new limit \eqref{SppLimitNonAbelian} to expand the discriminant, we get
\begin{equation}\label{LeadingSppOdd}
\Delta|_{\rm spp}\sim\left[a_1^2+\sigma^2a_{2,2}\right]^2\,\sigma^5\,a_1\,\left[a_{3,2}a_{4,3}-a_1a_{6,5}+\mathcal{O}(\sigma)\right]\;\epsilon^2\,,
\end{equation}
where we see that the previous pattern of intersections and enhancements is not respected. The conifold points are scaled away, as it should be, while the $SO(12)$ points are kept. However, the meaningful quantity is the discriminant after the blow-up of the suspended pinch point. In the table \eqref{PT} we have to constrain the complex structure in order to extract a further factor of $s$ from $\tilde a_{4,2}$ and $\tilde a_{6,4}$, as required by the Tate prescription for $SU(5)$. This leads us to the following discriminant
\begin{equation}\label{LeadingSppPTOdd}
\Delta|_{\rm spp}\sim\left[a^2+s^2a_{2,2}\right]^2\,s^5\,a^3\,\left[a^{(0,1)}_{3,2}a^{(0,1)}_{4,3}-a^{(0,1)}_{6,5}+\mathcal{O}(s)\right]\;\epsilon^2\,,
\end{equation}
where now the last polynomial in square brackets has class $5c_1-5\mathcal{D}$.  Our new geometry successfully reproduces the $SU(5)$ gauge degrees of freedom on the D7-stack at $s=0$. As for codimension two, we have again
\begin{equation}
\cancel{{\rm\bf 10}\;:\quad\left\{\begin{array}{l}s=0\\ a=0\end{array}\right.}\qquad,\qquad{\rm\bf 5}\;:\quad\left\{\begin{array}{l}s=0\\ a^{(0,1)}_{3,2}a^{(0,1)}_{4,3}-a^{(0,1)}_{6,5}=0\,.\end{array}\right.
\end{equation}
While the curve accommodating matter in the fundamental representation is successfully reproduced, the one hosting matter in the antisymmetric disappears. One may think of ``higgsing'' the antisymmetric matter curve using the exceptional divisor, in analogy to what proposed for $SU(4)$. Here factors of $w$ appear in \eqref{LeadingSppPTOdd} for  $SU(2k+1)$ with $k\geq4$. However, we defer a more accurate analysis of this issue to future work.

\subsection{An alternative spp}\label{secondApproach}

 As stressed in the previous section, the suspended pinch point geometry after resolution does not  reproduce the antisymmetric matter curve. In this section we propose a way out of this problem, by slightly modifying the definition of the new weak coupling limit \eqref{SppLimitNonAbelian}. We focus our attention on $SU(4)$ F-theory configurations and only say few words about other cases towards the end of the section.

Assume the D7-stack wraps a \emph{spin} manifold. This hypothesis is necessary for this alternative limit to work. Then consider the following weak coupling limit
\begin{equation}\label{sppLimitSpin}
\begin{cases}
a_{2,1} & \longrightarrow \epsilon \,a_{2,1}+\tfrac14P^2\\ 
a_{3,2} &\longrightarrow \epsilon \,a_{3,2}\\ 
a_{4,2} &\longrightarrow \epsilon \,a_{4,2}\\ 
a_{6,4} &\longrightarrow \epsilon^2 \,a_{6,4}
 \end{cases}
\end{equation}
 where $P$ is a section of a line bundle of  class $c_1-\mathcal{D}/2$, which makes sense since $\mathcal{D}$ is by assumption an even class. The new Calabi-Yau theefold geometry is still singular of the suspended pinch point-type
\begin{equation}\label{SPPspin}
X_3:(\xi-a_1)\,(\xi+a_1)=\sigma\,P^2\,.
\end{equation}
But this time the curve of singularities is \emph{not} the intersection of the D7-stack with the O7-plane. This different improvement of the Donagi-Wijnholt limit is still harmless from the point of view of the discriminant, as we get
\begin{equation}\label{LeadingSppSpin}
\Delta|_{\rm spp}\sim\left[a_1^2+\sigma P^2\right]^2\,\sigma^4\,\left[a_{4,2}^2+a_1(a_{3,2}a_{4,2}-a_1a_{6,4})+\mathcal{O}(\sigma)\right]\;\epsilon^2\,.
\end{equation}
Let us now blow-up \eqref{SPPspin} and convince ourselves that limit \eqref{sppLimitSpin} successfully reproduces the physics of the $SU(4)$ models in codimension one and two. The resolution procedure goes exactly as in the previous case. The resolved Calabi-Yau threefold is the complete intersection
\begin{equation}\label{resSPPspin}
\tilde{X}_3:\left\{\begin{array}{rcl}\xi^2&=&a^2+\sigma\,p^2\\ w\,p&=&P \\ w\,a&=&a_1\,,\end{array}\right.
\end{equation}
and $\xi\,p\,a$ is an element of the Stanley-Reisner ideal of the ambient six-dimensional manifold. Again it can be viewed as the double cover of the manifold
\begin{equation}\label{resolvedBaseSpin}
\tilde{B}_3\;:\quad\left\{\begin{array}{rcl}w\,p&=&P \\ w\,a&=&a_1\,.\end{array}\right.\qquad,\qquad\begin{array}{ccc}p&a&w\\ \hline 1&1&-1\end{array}\,.
\end{equation}
which is the blow-up of $B_3$ along the curve $\{a_1=P=0\}$.

The stack of D7-branes and its orientifold image are described by the following systems
\begin{equation}
{\rm D7}_\pm:\left\{\begin{array}{rcl}\xi&=&\pm a\\ \sigma&=&0 \\ wp&=&P\\ wa&=&a_1\,.\end{array}\right.
\end{equation}
They now intersect on a curve which lies on the O7-plane. The latter is the surface
\begin{equation}\label{O-planeSpin}
{\rm O7}:\left\{\begin{array}{rcl} \xi&=&0\\ a^2&=&-\sigma \\ w\,a&=&a_1 \\ w&=&P\,,\end{array}\right.
\end{equation}
where we have fixed the gauge by taking $p=1$. We easily recognize from \eqref{O-planeSpin} a surface wrapping the divisor $\{a_1^2+\sigma P^2=0\}$ of the original base $B_3$, as it should be. Finally, the exceptional divisor is
\begin{equation}
E:\begin{cases}
w       &=0\\ 
a_1    &=0 \\
 P       &=0\\
 \xi^2 &=a^2+\sigma\,p^2
\end{cases}
\end{equation}
which has the geometry of an orientifold-invariant, quadratic $\mathbb{P}^1$ with homogeneous coordinates $a,p$, fibered over the locus $\{P=0\}\cap\{a_1=0\}\subset B_3$. On the location of the D7-stack, $\{\sigma=0\}$, the fiber of the exceptional divisor splits into two linear spheres, $\mathbb{P}^1_{pa}|_{\xi=\pm a}$, exchanged by the orientifold involution. This last geometry may turn useful, as it will be clear below.

Let us now look at the proper transform of the discriminant \eqref{LeadingSppSpin}. It is not difficult to understand that blue the Tate coefficients $a_{3,2}, a_{6,4}$ cannot be constrained as imposed by 7-brane tadpole cancellation. In contrast $a_{4,2}$ must be replaced by the monomial $p^4$, which is the most generic form of the appropriate degree. Therefore the discriminant simply reads
\begin{equation}\label{LeadingSppSpinPT}
\Delta|_{\rm spp}\sim \left[a^2+\sigma p^2\right]^2\,\sigma^4\,p^8\;\epsilon^2\,.
\end{equation}
Here we see that the U(1) D7-brane has undergone a drastic change and, due to its high degree, has given rise to a stack of eight D7-branes plus separated orientifold images (much like the $SU(4)$ stack in the resolved suspended pinch point geometry of section \ref{resCY3spp}). This new stack accommodates an $SU(8)$ flavor symmetry. There are two possible matter curves, described by the following intersections
\begin{equation}
{\rm\bf 6\;:\quad\left\{\begin{array}{l}\sigma=0\\ a=0\end{array}\right.}\qquad,\qquad{\rm\bf 4}\;:\quad\left\{\begin{array}{l}\sigma=0\\ p=0\,.\end{array}\right.
\end{equation}
The first is the curve where matter in the antisymmetric representation of $SU(4)$ is localized, which arises from the ordinary enhancement to $SO(8)$ along the O7-plane. Notice that it now survives the resolution. The second is the curve where matter in the fundamental representation of $SU(4)$ lives. Since this curve is the intersection of the gauge stack with an $SU(8)$ flavor stack, there is an enhancement to $SU(12)$ along it and the matter fields localized there transform in the fundamental representation of the flavor group. If we now look at the intersection of these two matter curves to search for the Yukawa couplings, we readily see that it is empty, because it is part of the locus which has been blown-up. However, we see that the triple intersection we are looking for is replaced by the curve $E\cap {\rm D7}_+\cup E\cap {\rm D7}_-$
\begin{equation}
\begin{cases}
w&=0\\
 a_1&=0 \\ 
P&=0\\
 \sigma&=0\\ 
\xi&=a\end{cases}
\quad
\cup \quad
\begin{cases}
w&=0\\ 
a_1&=0 \\ 
P&=0\\ 
\sigma&=0\\ 
\xi&=-a
\end{cases}
\end{equation}
whose typical fiber, as already mentioned, is a pair of $\mathbb{P}^1$s, one the orientifold image of the other, touching at a point. One is now tempted to argue that we have an effective, non-perturbative type IIB description of the ${\rm\bf 6}\,\bar{\rm\bf 4}\,\bar{\rm\bf 4}$ Yukawa coupling by means of D1-instantons wrapping one $\mathbb{P}^1$ and anti-D1-instantons wrapping the image $\mathbb{P}^1$. However, we have not performed an accurate analysis of this system: Besides proving that it is actually stable, one has to make sure that the instantons in question have the right number of neutral zero-modes so to contribute to the superpotential and generate the wanted Yukawa coupling. We hope to clarify all that in a future work.

To conclude this section, let us stress that limit \eqref{sppLimitSpin} does not properly work beyond $SU(4)$, i.e. for F-theory configurations with $SU(N\geq4$) singularity. This is because there is no way in the geometry of $\tilde{B}_3$ to satisfy the 7-brane tadpole. Therefore the validity of the weak coupling limit presented in this section is limited to $SU(4)$ F-theory configurations with gauge stack wrapping a spin manifold.

\section{Conclusions}\label{Outlook}

In this paper, we have discussed different realizations of  Sen weak coupling limit\cite{Sen.Orientifold}  which are alternatives or specializations of the  traditional Donagi-Wijnholt ansatz \cite{Donagi:2009ra}. 
The main purpose has been to provide a systematic way of solving the conifold problem afflicting the DW ansatz when applied to singular fibrations with unitary gauge groups: These singularities do not admit admissible crepant resolutions\footnote{An admissible crepant resolution of a double cover $X$ is a resolution  of $X$  that preserves the first Chern class and is compatible with the structure of the double cover (and therefore with the orientifold involution).}.  We also analyze the weak coupling limit of all  gauge groups implemented by Tate forms, including the exceptional ones. 
The properties of these limits are somehow surprising: 
\begin{enumerate}
\item  The gauge group seen at weak coupling  is not necessarily the same as the one observed in F-theory. This is expected in certain cases, like for example for exceptional gauge groups in F-theory as they are not present at weak coupling. The groups seen in the weak coupling limit can be orthogonal, unitary or symplectic. An orthogonal gauge group appears when the locus of the brane coincides with a component of the orientifold locus. 
\item The gauge group seen at weak coupling is not necessarily a subgroup of the F-theory group. For example, for $E_6$, we get a group $\SO(12)$. One could argue that this $\SO(12)$ should reduce to $\SO(10)$ if it is generated by a perturbative subset of  open strings that generate $E_6$ in F-theory. This would match the description of $E_6$ in F-theory using monodromies of ``$ABC$" branes. A clear, string-based  deduction of gauge group at weak coupling for exceptional singularities (II,III,IV and duals thereof) of elliptic fibration of dimension grater that two is still missing. This is due to the fact that there is no way of getting a weak string coupling on the would-be gauge stack\footnote{This problem does not happen for exceptional singular fibers in weak coupling limits considered in \cite{AE2,EFY} which are not based on  Tate forms.}. 
\item In the DW-weak coupling limit, the same gauge group is obtained for fibers  regardless of it being split/non-split/semi-split. This is true with the exception of the $I_n$ fibers that  lead to unitary and symplectic gauge groups. 
\item  For unitary gauge groups, we can maintain the group and its  rank and get a double cover with an  admissible crepant resolution  if we use  the  suspended pinch point (spp) limit. However, it  requires introducing a term ($a_{2,2}$) which is a section of a line bundle of class $2L-2\mathcal{D}$. The existence of such a section is a non-trivial  topological constraint.  There is an alternative limit also leading to a double cover with an admissible crepant resolution and which is free of such a topological constraint but which  leads to  orthogonal gauge groups at weak coupling. 

\item
We also note  a possible tension between a (crepant) resolution $\tilde{X}\rightarrow X$ of  a Calabi-Yau and the D7 tadpole cancellation condition which requires the vanishing of the total D7 charge. Indeed, after a resolution, the branes are expected to wrap the proper transforms of the cycles they used to wrap in $X$. However, if some of these cycles intersect the center of the blow-up with multiplicities, their classes will get a contribution from the exceptional divisor. It follows that the D7 tadpole can be in  jeopardy as it is based on a delicate equilibrium between the class of the D7 branes and the orientifold plane. We will give an important example below. 
\begin{example}[Tadpole requirements for the typical configuration]
Consider the typical situation that occurs for the weak coupling limits we have in this paper:  A singular double cover is resolved by blowing-up a codimension-two locus of multiplicity 2. The spectrum  consists of an orientifold $\underline{O}$, a stack of  $r$ D7 branes on $\underline{D}$ and a spectator $U(1)$ brane $\underline{D}'$, all described in the base. We assume that 
the stack intersects the center of the blow-up  with multiplicity 2.  Before the resolution, we have the tadpole 
$$8[\underline{O}]-(r [\underline{D}]+[\underline{D}'])=0.$$ 
After the blow-up, we can evaluate the D7 charge mismatch:
$$
8[\underline{O}-E]-(r [\underline{D}-E]+[\underline{D}'])=(r-8)E.
$$
If $r=8$,  the proper transform of the spectrum does satisfy the tadpole. 
If $r<8$, the tadpole charge would require  a negative contribution proportional to the exceptional divisor $E$. This can be for example a stack of $8-r$ anti-D7 branes, which will break supersymmetry.  If $r>8$, the tadpole can be canceled by wrapping $(r-8)$ D7-branes on the exceptional divisor. We present a solution to the problem when $r<8$. Indeed, we cannot keep the spectrum of the proper transforms. But if we would like to keep the orientifold and the stack unchanged we can modify the remaining brane in such a way that the tadpole is preserved. We can think of it as a supersymmetric brane recombination of the stack of anti-branes and the witness  brane.  We have obtained a natural description of  the final result of such a recombination using a Calabi-Yau elliptic fibration  over the base of the resolution of $X$.
\end{example}
\end{enumerate}

In the second part of the paper, we have focused on questions relevant for  phenomenological GUT model building.
 In particular we have explored the possibility of realizing in an effective way as much as we could of the physics of F-theory $SU(N)$ configurations using only the weakly coupled dynamics of type IIB strings. Therefore we started by requiring that we have (on a arbitrary divisor $\{\sigma=0\}$) an $SU(N)$ stack of D7-branes and its orientifold image in a smooth Calabi-Yau threefold. This one condition (together with the fact that the Calabi-Yau threefold is the double cover of the base of the elliptic fibration) already constrain the hypersurface equation to have the following conifold form
\begin{equation}
\xi^2=a_1^2+\sigma\,B\,,
\end{equation}
where $B$ is a polynomial of the base of the appropriate degree. $B$ is the only factor we can play with in order to achieve a more tractable singularity. Now, if we impose that $\sigma$ divides $B$, as we did in subsection \ref{PhysicalProperties}, the antisymmetric matter curve becomes the singular locus and it is blown-up in the resolved picture. Alternatively, we can deform $B$ to be a perfect power. Here two sub-cases are possible. The power is even (the basic case of power two has been explored in subsection \ref{secondApproach}), which only works with the assumption of spin-ness of the gauge divisor; The case of $SU(4)$ seems to work with this strategy, but higher rank gauge groups seems incompatible with 7-brane tadpole cancellation. The power is odd; This case reduces to the original conifold, after a series of resolutions.

More work is needed in the investigation of a full effective description within the realm of type IIB string theory of the strongly coupled physics of unitary F-theory configurations. In particular, suitable instanton effects will be required in order to reproduce certain expected Yukawa interactions. We hope to come back to all these issues in a future work.

\subsection*{Acknowledgements}
We would like to thank Andr\'es Collinucci for initial collaboration and for many fruitful conversations. We also like to acknowledge useful discussions with Frederik Denef, I\~naki Garc\'ia-Etxebarria, Thomas Grimm, Hirotaka Hayashi, Stefan Hohenegger, Shamit Kachru, Timo Weigand, Martijn Wijnholt and Shing-Tung Yau. 
We would  like to acknowledge the hospitality of the Simons Center  for Geometry and Physics in Stony Brook where this project was born.  M.E.  is very grateful to the members of the Taida Institute  for  hospitality. He would also like to thank Imran Esole for his joyful cooperation at different stage of this project.

\appendix

\section{Admissible crepant resolutions of double covers}\label{AdmissibleResolutions}

In this appendix, we will study some standard properties of double covers and  their resolutions. We refer to \cite{CS,Miranda} for the proof. 
We will first start by recalling some basic definitions necessary for the rest of the discussion. 
\begin{definition}[Finite maps]
We denote by $A(X)$ the coordinate ring of a variety $X$. 
A map $X\rightarrow Y$ is said to be {\em finite} if $A(X)$ is  locally  a finitely generated  module over $A(Y)$. 
The   {\em degree} of the finite map is then by definition the degree $[A(X):A(Y)]$ of the field extension . If $X$ and $Y$ are projective, a finite map is equivalent to a map with finite fibers. 
\end{definition}

We can now introduce the definition of a double cover. 
\begin{definition}[double cover]
A double cover $\rho: X\rightarrow B$ is a flat finite map or rank 2 between  varieties $X$ and $B$. 
\end{definition}
The flatness condition is to ensure that each fiber has the same number of points (counted with multiplicity). 
Locally, over an open affine set $U\in B$, a double cover  is given by an equation of the type $z^2=h_U$. The collection of such $h_{\alpha}$ over a finite open cover $\bigcup_{\alpha } U_\alpha$ of the base $B$ defines the   {\em branch divisor}  $\underline{O}$ of the double cover using the local equations  $h_\alpha=0$. Such a divisor $\underline{O}$ is uniquely determined by the double cover map. This can be seen  for example\footnote{See  Eisenbud and Harris's book on intersection theory.}   by computing the pushforward of the relative differential sheaf $\Omega_{X/B}$. The variable $z$ is then a section of $\mathscr{L}$ and  $\underline{O}$ is a section of  $\mathscr{L}^2$.  Together,  the pair $(\underline{O}, \mathscr{L})$ characterizes the double cover $\rho: X\rightarrow B$. This is summarized in the following proposition. 

\begin{prop}[Characterization of the space of double covers over a fixed base (\cite{CS,Miranda})]
Let $B$ be any variety over a field of characteristic different than $2$. There is a one-to-one correspondence between double covers $\rho:X\rightarrow B$ and 
pairs $(\mathscr{L}, \underline{O})$ consisting of a line bundle $\mathscr{L}$ and the divisor $\underline{O}\in \mathscr{L}^{2}$.  
\end{prop}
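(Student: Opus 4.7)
The plan is to construct maps in both directions between the set of isomorphism classes of double covers of $B$ and the set of pairs $(\mathscr{L},\underline{O})$, and then check they are mutual inverses. The key algebraic object to exploit is the pushforward $\rho_\star \mathscr{O}_X$, which by the flatness hypothesis is a locally free $\mathscr{O}_B$-module of rank $2$.

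First I would go from a double cover $\rho:X\to B$ to a pair. Because $\rho$ is a finite flat morphism of degree $2$, the sheaf $\mathscr{A}:=\rho_\star\mathscr{O}_X$ is a locally free $\mathscr{O}_B$-algebra of rank $2$, and the natural inclusion $\mathscr{O}_B\hookrightarrow\mathscr{A}$ via pullback of functions is a direct summand. In characteristic $\neq 2$ this splitting is canonical: using the trace map $\tfrac{1}{2}\mathrm{tr}_{\mathscr{A}/\mathscr{O}_B}$ (equivalently the nontrivial involution of the double cover), one gets a decomposition $\mathscr{A}=\mathscr{O}_B\oplus\mathscr{M}$ into invariant and anti-invariant parts, where $\mathscr{M}$ is an invertible sheaf. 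Setting $\mathscr{L}:=\mathscr{M}^{-1}$, the $\mathscr{O}_B$-algebra structure reduces to the multiplication map $\mathscr{M}\otimes\mathscr{M}\to\mathscr{O}_B$, which by duality gives a canonical section $s\in H^0(B,\mathscr{L}^{2})$; its zero locus defines $\underline{O}$.

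In the other direction, given $(\mathscr{L},\underline{O})$ with $\underline{O}$ cut out by a section $s\in H^0(B,\mathscr{L}^{2})$, I would define the sheaf of $\mathscr{O}_B$-algebras
\begin{equation*}
\mathscr{A}(\mathscr{L},s)\;:=\;\mathscr{O}_B\oplus \mathscr{L}^{-1},
\end{equation*}
with multiplication determined by $\mathscr{L}^{-1}\otimes\mathscr{L}^{-1}\xrightarrow{\,\cdot s\,}\mathscr{O}_B$, and set $X:=\underline{\mathrm{Spec}}_B\,\mathscr{A}(\mathscr{L},s)$. The induced morphism $\rho:X\to B$ is affine, finite of degree $2$, and flat because $\mathscr{A}$ is locally free. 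Locally, after trivializing $\mathscr{L}$ and letting $z$ denote the tautological generator of $\mathscr{L}^{-1}$, the relation reads $z^{2}=h_\alpha$, recovering the local description of the branch divisor and confirming that the associated pair is indeed $(\mathscr{L},\underline{O})$.

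Finally I would verify that the two constructions are mutually inverse. One direction is tautological by the local presentation $z^2=h_\alpha$. The other direction amounts to checking that, for a double cover, the canonical anti-invariant summand $\mathscr{M}$ of $\rho_\star\mathscr{O}_X$ together with the induced multiplication section recovers the original $(\mathscr{L},s)$ up to isomorphism of pairs; this is a local check glued via the cocycle defining $\mathscr{L}$. The main obstacle is the decomposition $\mathscr{A}=\mathscr{O}_B\oplus\mathscr{L}^{-1}$, which fails in characteristic $2$ (there the trace degenerates and one must instead classify double covers by Artin--Schreier data); this is precisely where the hypothesis $\mathrm{char}\neq 2$ enters, and everything else reduces to a straightforward local-to-global patching argument.
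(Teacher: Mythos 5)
Your argument is correct and is exactly the standard one: split $\rho_\star\mathscr{O}_X=\mathscr{O}_B\oplus\mathscr{M}$ by the trace (this is where $\mathrm{char}\neq 2$ enters), read off the pair from the multiplication $\mathscr{M}\otimes\mathscr{M}\to\mathscr{O}_B$, and invert via $\underline{\mathrm{Spec}}_B(\mathscr{O}_B\oplus\mathscr{L}^{-1})$. The paper itself gives no proof but defers to \cite{CS,Miranda}, which proceed in precisely this way, and your careful placement of the duals (multiplication on $\mathscr{L}^{-1}\otimes\mathscr{L}^{-1}$ giving a section of $\mathscr{L}^{2}$) is in fact more accurate than the paper's own phrasing of the follow-up proposition.
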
 

The next proposition characterizes $\mathscr{O}_X$ and its multiplication structure in terms of the algebraic properties of the base and the pair $(\underline{O}, \mathscr{L})$.

\begin{prop}[Algebraic properties of double covers]
Consider $\rho:X\rightarrow B$ a finite map of degree two of a variety $X$ onto a smooth variety $B$. We denote by $\underline{O}\subset B$ the branch locus of the map $\rho$. We then have the following properties:
\begin{enumerate}[(i)]
\item $\rho_\star \mathscr{O}_X=\mathscr{O}_B\oplus \mathscr{L}$ for a line bundle $\mathscr{L}$ on $B$. 
\item The multiplication in $\mathscr{O}_X$  is given by a map $\mathscr{L}\otimes \mathscr{L}\rightarrow \mathscr{O}_B$  or equivalently , by a section $h\in H^0(B, \mathscr{L}^{2})$. The zero locus of $h$ is the branch locus of $B$. Hence $\mathscr{O}_Y(\underline{O})\simeq \mathscr{L}^{2}$. 
\end{enumerate}
\end{prop}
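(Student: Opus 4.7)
The plan is to exploit that $\rho$ is a finite map of degree $2$ onto a smooth variety. Miracle flatness (Matsumura's criterion, applicable because a finite-degree map between smooth varieties is automatically flat) gives flatness of $\rho$, and combined with degree $2$ this implies that $\rho_\star\mathscr{O}_X$ is a locally free $\mathscr{O}_B$-module of rank $2$. The algebra structure on $\rho_\star\mathscr{O}_X$ gives a canonical embedding $\mathscr{O}_B\hookrightarrow \rho_\star\mathscr{O}_X$ as the subsheaf of fiberwise-constant sections.

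For part (i), I would split this embedding using the trace. Since we are in characteristic different from $2$, the map $\tfrac{1}{2}\mathrm{tr}:\rho_\star\mathscr{O}_X\to \mathscr{O}_B$, where $\mathrm{tr}(u)$ denotes the trace of multiplication by $u$ on the rank-$2$ module $\rho_\star\mathscr{O}_X$, is a left inverse of the inclusion. Setting $\mathscr{L}:=\ker\!\bigl(\tfrac{1}{2}\mathrm{tr}\bigr)$ then yields the decomposition $\rho_\star\mathscr{O}_X=\mathscr{O}_B\oplus\mathscr{L}$, and since the total rank is $2$, $\mathscr{L}$ is locally free of rank $1$, i.e.\ a line bundle on $B$.

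For part (ii), I would study the induced multiplication on the $\mathscr{L}$-summand. Any local section $s\in\mathscr{L}$ satisfies $\mathrm{tr}(s)=0$, so Cayley--Hamilton on the rank-$2$ module $\rho_\star\mathscr{O}_X$ gives $s^2=-\det(s)\in\mathscr{O}_B$. Polarizing by the identity $2st=(s+t)^2-s^2-t^2$, the product of any two sections of $\mathscr{L}$ lands in $\mathscr{O}_B$, so the algebra multiplication restricts to an $\mathscr{O}_B$-linear map $\mu:\mathscr{L}\otimes\mathscr{L}\to\mathscr{O}_B$. By tensor--hom adjunction this is a global section $h$ of the line bundle dual to $\mathscr{L}\otimes\mathscr{L}$; in the sign convention adopted in the paper this is the element $h\in H^0(B,\mathscr{L}^{2})$.

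Finally, to identify the branch locus I would work on a trivializing open $U\subset B$ on which $\mathscr{L}|_U\cong\mathscr{O}_U$. Choosing a generator $z$ of $\mathscr{L}|_U$, the algebra $(\rho_\star\mathscr{O}_X)|_U$ is identified via $\mu$ with $\mathscr{O}_U[z]/(z^2-h_U)$. The scheme-theoretic fiber at a point $b\in U$ is then $\mathrm{Spec}\,\kappa(b)[z]/\bigl(z^2-h_U(b)\bigr)$, which degenerates to a non-reduced double point precisely when $h_U(b)=0$. Hence the branch divisor $\underline{O}$ is cut out by the section $h$, giving $\mathscr{O}_B(\underline{O})\simeq \mathscr{L}^{2}$. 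The only delicate step in the argument is the invocation of miracle flatness at the very start; once local freeness is in hand, everything reduces to elementary linear algebra in the trace decomposition.
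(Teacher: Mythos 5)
The paper does not actually prove this proposition; it defers to the references \cite{CS,Miranda}, and the argument given there is essentially the one you reconstruct: split $\mathscr{O}_B\hookrightarrow\rho_\star\mathscr{O}_X$ with $\tfrac{1}{2}\mathrm{tr}$ (characteristic $\neq 2$), identify the trace-zero complement as a line bundle, and use Cayley--Hamilton plus polarization to see that the multiplication restricts to a map $\mathscr{L}\otimes\mathscr{L}\to\mathscr{O}_B$ whose local form $z^2=h_U$ exhibits the branch divisor as the zero locus of $h$. So your route is the standard one, and the body of the argument is correct, including your (correct) observation that the $\mathscr{L}$ appearing in the decomposition of (i) is really the inverse of the $\mathscr{L}$ whose square carries $h$ in (ii) --- the paper's statement silently conflates the two.

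The one step that is wrong as written is the opening appeal to miracle flatness: you justify it by saying that a finite-degree map \emph{between smooth varieties} is automatically flat, but $X$ is not assumed smooth --- indeed the entire point of the paper is that these double covers are singular (conifolds, quadric cones, suspended pinch points). Miracle flatness requires the source to be Cohen--Macaulay, not smooth. Two honest fixes: either invoke the paper's own definition of a double cover, which builds flatness into the hypotheses, so that $\rho_\star\mathscr{O}_X$ is locally free of rank $2$ immediately from finiteness plus flatness; or observe that every cover occurring in the paper is a hypersurface $\xi^2=h$ in the total space of a line bundle over the smooth base $B$, hence Cohen--Macaulay, so miracle flatness does apply. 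For a completely arbitrary finite degree-two map of a variety onto a smooth variety, flatness (and with it the local freeness asserted in (i)) can fail, so some such hypothesis is genuinely needed. Everything after that point --- the trace splitting, the identity $s^2=-\det(s)$ for trace-zero sections, the polarization $2st=(s+t)^2-s^2-t^2$, and the identification of the branch divisor from the local presentation $\mathscr{O}_U[z]/(z^2-h_U)$ --- is fine.
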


We now explain how the smoothness of a double cover is equivalent to the smoothness of its branch divisor $\underline{O}$.

\begin{prop}[Characterization of smooth double covers] \hfill
\begin{enumerate} [(i)]
\item A double cover $X$ is smooth if and only if  its branch divisor $\underline{O}\subset B$ is smooth.
\item If the branch divisor $\underline{O}$ is smooth than  $K_X=\rho^\star (K_B\otimes \mathscr{L}^{-1})$,  where $K_X$ and  $K_B$ are  respectively the canonical classes of the corresponding varieties $X$ and  $B$. 
\end{enumerate}
\end{prop}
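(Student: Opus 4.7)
The plan is to treat the two parts separately, using that smoothness is a local property and that the canonical-bundle computation reduces to a Hurwitz-type formula. For (i), I observe that $\rho$ is \'etale over $B \setminus \underline{O}$, so the only points worth examining are those of $X$ lying above $\underline{O}$. Trivializing $\mathscr{L}$ on a small open $U \subset B$, the double cover is cut out in $U \times \mathbb{A}^1$ as $F := z^2 - h = 0$, with $h$ a local equation for $\underline{O}$. The Jacobian criterion says that $X$ is singular at $(b_0, z_0)$ exactly when $dF = 2 z_0\, dz - dh|_{b_0} = 0$. In characteristic different from $2$ this forces $z_0 = 0$ (whence $h(b_0) = 0$) and $dh|_{b_0} = 0$, which is precisely the condition that $\underline{O}$ is singular at $b_0$. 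This yields a bijection between $\mathrm{Sing}(X)$ and the preimage of $\mathrm{Sing}(\underline{O})$, establishing both implications of (i).

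For (ii), the plan is a Hurwitz-type argument. Let $R \subset X$ be the reduced fixed locus of the sheet-exchanging involution; under (i) it maps isomorphically onto $\underline{O}$. The local equation $2z\, dz = dh$ already used in part (i) gives the basic divisor relation $\rho^* \underline{O} = 2R$ on $X$. Combined with $\mathscr{O}_B(\underline{O}) \simeq \mathscr{L}^{\pm 2}$ (with the sign fixed by the excerpt's convention), this yields $\mathscr{O}_X(R) \simeq \rho^*\mathscr{L}^{\mp 1}$, the square-root ambiguity being resolved by the explicit embedding of $X$ into the total space of $\mathscr{L}$. The canonical-bundle identity now follows from the standard short exact sequence
\begin{equation*}
0 \;\longrightarrow\; \rho^*\Omega^1_B \;\longrightarrow\; \Omega^1_X \;\longrightarrow\; \Omega^1_{X/B} \;\longrightarrow\; 0,
\end{equation*}
in which $\Omega^1_{X/B}$ is generated locally by $dz$ subject to $2z\, dz = dh$, hence is a torsion sheaf supported on $R$ with multiplicity one. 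Taking determinants produces the relative dualizing formula $K_X \simeq \rho^* K_B \otimes \mathscr{O}_X(R)$, and substituting the expression for $\mathscr{O}_X(R)$ gives the claimed $K_X = \rho^*(K_B \otimes \mathscr{L}^{-1})$.

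I expect the main obstacle to be the careful bookkeeping of line-bundle conventions in the last step: the equation $\rho^*\underline{O} = 2R$ only determines $\mathscr{O}_X(R)$ up to a $\mathbb{Z}/2$-torsion ambiguity, and pinning down the correct sign requires returning to the decomposition $\rho_*\mathscr{O}_X = \mathscr{O}_B \oplus \mathscr{L}$ and identifying the square root coming from the explicit local model. Once this bookkeeping is done, the remainder is either the Jacobian criterion of (i) or a routine computation of K\"ahler differentials.
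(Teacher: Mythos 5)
Your argument is correct, and it is the standard one. Be aware that the paper itself supplies no proof of this proposition: the appendix explicitly defers all of these double-cover facts to the references \cite{CS,Miranda}, so your write-up is filling a genuine gap rather than duplicating an in-paper argument. Part (i) is exactly the local Jacobian computation on $z^2=h$ one expects, and part (ii) is the Riemann--Hurwitz computation via the relative cotangent sequence, with $\Omega^1_{X/B}\cong\mathscr{O}_R$ in characteristic $\neq 2$. The one place where your caution is warranted is the sign of $\mathscr{L}$: the paper's own conventions are not internally consistent here, since declaring $z$ a section of $\mathscr{L}$ with $\underline{O}\in|\mathscr{L}^{2}|$ would give $K_X=\rho^\star(K_B\otimes\mathscr{L})$, while $\rho_\star\mathscr{O}_X=\mathscr{O}_B\oplus\mathscr{L}$ gives the stated $K_X=\rho^\star(K_B\otimes\mathscr{L}^{-1})$; your resolution of the $\mathbb{Z}/2$ ambiguity in $\mathscr{O}_X(R)$ via the tautological section on the total space of the line bundle is the right mechanism, but you should state explicitly which convention you adopt so the final exponent is forced rather than chosen to match the target. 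Two routine points worth a sentence each in a polished version: the left-exactness of $0\to\rho^\star\Omega^1_B\to\Omega^1_X$ holds because $\rho$ is generically \'etale and $\rho^\star\Omega^1_B$ is torsion-free on the integral variety $X$; and in part (i) the "smoothness" of $\underline{O}$ must be read scheme-theoretically (a non-reduced branch divisor counts as singular), which is what makes the equivalence with $dh|_{b_0}=0$ an if-and-only-if.
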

The previous proposition implies that we can reduce the resolution of a double cover to the resolution of its branch divisor. 
This fact is exploited in the following proposition which describes how a double cover behaves under a blow-up in the base. The idea is to pull back the double cover on the  blown-up base. 
The pull-back will use a fibered product so we will first recall its definition.  Consider two maps $\varphi_1 :X_1\rightarrow S$ and  $\varphi_2 :X_2\rightarrow S$. Then $X_1\times_S X_2$ is the set of pairs $(x_1,x_2)\in X_1\times X_2$ that project to the same element on $S$:
 $X_1\times_S X_2=\{(x_1,x_2)\in X_1\times X_2: \  \varphi_1(x_1)=\varphi_2(x_2)\}$ 

\begin{prop}[Resolution of double covers (see for example \cite{CS}]
Let $Z$ be a smooth sub-variety of $B$, we denote by  $j: B_Z\rightarrow B$ the blow-up of $B$ along $Z$ and  $E_Z\subset B_Z$ the exceptional divisor of the blow-up. Let $X_Z=X\times_B B_Z$, then  $\rho_Z:\tilde{X}_Z\rightarrow B_Z$ with $\tilde{X}_Z$ is the normalization of $X_Z$. The  branch locus  $\tilde{\underline{O}}_Z$ of $\rho_Z$ is  
\begin{equation}
\tilde{\underline{O}}_Z=\underline{O}_Z+\epsilon_Z E_Z, \quad \text{with}\quad \epsilon_Z=\begin{cases}
0\quad  \text{if  the multiplicity of $Z$ in $\underline{O}$ is even  },  \\
1 \quad  \text{if  the multiplicity of $Z$ in $\underline{O}$ is odd  },
\end{cases}
\end{equation}
where $\underline{O}_Z$ is the proper transform of $\underline{O}$ in $B_Z$. 
\end{prop}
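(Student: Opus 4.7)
The claim is local on $B_Z$ and trivially true away from the exceptional divisor $E_Z$, since $j$ is an isomorphism there and $\tilde X_Z$ agrees with $X$. So the plan is to reduce to a computation in an analytic (or étale) neighborhood of a generic point of $E_Z$, and patch. Choose local coordinates on $B$ adapted to the smooth subvariety $Z$: write $Z = \{x_1 = \cdots = x_c = 0\}$ with transverse parameters $y$, and let $h$ be a local defining equation of $\underline{O}$. By definition, the multiplicity $m = \mathrm{mult}_Z(\underline{O})$ is the smallest total degree in the $x_i$ which appears in $h$, so $h = h_m(x;y) + (\text{higher order in } x)$ with $h_m$ a nonzero homogeneous polynomial of degree $m$ in the $x_i$.

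Next I would read off the pullback in a standard chart of $B_Z$. In the chart with coordinates $(t, x_2', \ldots, x_c', y)$ corresponding to $x_1 = t$, $x_i = t x_i'$ for $i \geq 2$, the exceptional divisor is $E_Z = \{t=0\}$ and one computes $j^*h = t^m \tilde{h}$, where $\tilde{h}(0, x'; y) = h_m(1, x_2', \ldots, x_c'; y) \not\equiv 0$. Hence $\{\tilde h = 0\}$ is (in this chart) exactly the proper transform $\underline{O}_Z$, while $j^*\underline{O} = \underline{O}_Z + m E_Z$ as divisors.

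Now carry out the normalization. The fibered product $X_Z = X \times_B B_Z$ is locally $\mathrm{Spec}\, A$ with $A = \mathscr{O}_{B_Z}[z]/(z^2 - t^m \tilde h)$, and this is non-normal along $E_Z$ for $m \geq 2$. Write $m = 2k + \epsilon_Z$ with $\epsilon_Z \in \{0,1\}$; then the element $z' := z/t^k$ of the fraction field satisfies the integral equation
\begin{equation}
(z')^2 = t^{\epsilon_Z}\, \tilde h ,
\end{equation}
and hence lies in the normalization. Let $A' = \mathscr{O}_{B_Z}[z']/((z')^2 - t^{\epsilon_Z}\tilde h)$; then $A \subset A'$ is a finite birational extension, and I claim $A' = \tilde A$. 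The inclusion $A' \subset \tilde A$ is clear. For the reverse inclusion, $A'$ is a hypersurface in a smooth ambient, hence Cohen--Macaulay and in particular $S_2$; by Serre's criterion, normality follows once $A'$ is regular in codimension one, i.e.\ once the discriminant $t^{\epsilon_Z}\tilde h$ is reduced at a generic point of $E_Z$. Since $\tilde h|_{t=0} \not\equiv 0$ we have $t \nmid \tilde h$, and since $\underline{O}_Z$ is the proper transform of the (reduced) branch divisor $\underline{O}$ it is generically reduced; together these give regularity in codimension one. Reading the branch locus off of $(z')^2 = t^{\epsilon_Z}\tilde h$ then yields $\tilde{\underline{O}}_Z = \underline{O}_Z + \epsilon_Z E_Z$, and globalizing by patching charts of $B_Z$ gives the asserted equality of divisors.

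The main technical obstacle is exactly the verification that the partial normalization $A'$ obtained by adjoining $z' = z/t^k$ is already the full integral closure. This rests on the two generic statements above — that $E_Z \not\subset \underline{O}_Z$ by construction of the proper transform, and that $\underline{O}$ is a reduced branch divisor so that $\tilde h$ is generically reduced — and on the Cohen--Macaulayness of the quadric $(z')^2 = t^{\epsilon_Z}\tilde h$. Everything else is formal bookkeeping with the blow-up chart and the local structure $\rho_\star \mathscr{O}_X = \mathscr{O}_B \oplus \mathscr{L}^{-1}$ of double covers established in the preceding propositions, which in particular guarantees that the line bundle $\mathscr{L}_Z$ associated to the normalized double cover is determined by its branch divisor.
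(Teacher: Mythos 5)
Your argument is correct, and it is precisely the standard local computation that the paper itself does not carry out but defers to its references \cite{CS,Miranda}: pass to a blow-up chart where $j^\star h=t^m\tilde h$, adjoin $z'=z/t^{\lfloor m/2\rfloor}$ to the fibered product to get $(z')^2=t^{\epsilon_Z}\tilde h$, and check via Serre's criterion ($S_2$ from the hypersurface structure, $R_1$ from square-freeness of $t^{\epsilon_Z}\tilde h$ at codimension-one points) that this is already the normalization, so the branch divisor can be read off directly. The one point worth making explicit is that your $R_1$ step uses that $\underline{O}$ is reduced; this hypothesis is implicit in the proposition (and is needed for the statement to hold, since for a non-reduced $\underline{O}$ the normalization would also strip square factors away from $E_Z$), but since you invoke it you should state it.
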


\begin{prop}
A blow-up of the base $j:B_Z\rightarrow B$ along a smooth sub-variety $Z$ of codimension $r$ preserves the canonical class of a double cover  
 if and only if the multiplicity $m$ of $Z$ along the branch cover $\underline{O}$ is $m=2r-2$ or $m=2r-1$.
\end{prop}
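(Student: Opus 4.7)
The plan is to reduce the question to a divisor-class computation in the blown-up base. The three inputs are the canonical bundle formula for smooth double covers (preceding proposition), the description of the pulled-back branch divisor $\tilde{\underline{O}}_Z=\underline{O}_Z+\epsilon_Z E_Z$ from the preceding proposition, and the standard blow-up formula $K_{B_Z}=j^\star K_B+(r-1)E_Z$ for a smooth center of codimension $r$. Denote by $\mu:\tilde X_Z\to X$ the map induced by the fibered-product construction, so that $\rho\circ\mu=j\circ\tilde\rho$.

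Write $L=c_1(\mathscr{L})$, so $\underline{O}\sim 2L$. By the canonical bundle formula we have $K_X=\rho^\star(K_B+L)$, and likewise $K_{\tilde X_Z}=\tilde\rho^\star(K_{B_Z}+\tilde L)$, where $2\tilde L=[\tilde{\underline{O}}_Z]$. From the preceding proposition, $\tilde{\underline{O}}_Z=(j^\star\underline{O}-m E_Z)+\epsilon_Z E_Z$, and so
\begin{equation*}
\tilde L\;=\;j^\star L-k\,E_Z,\qquad k:=\tfrac{m-\epsilon_Z}{2}=\bigl\lfloor m/2\bigr\rfloor,
\end{equation*}
the key point being that $m-\epsilon_Z$ is always even by the definition of $\epsilon_Z$, so $\tilde L$ is a genuine integral divisor class and the new double cover is well-defined.

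Substituting the blow-up formula for $K_{B_Z}$ and using commutativity to identify $\mu^\star K_X=\tilde\rho^\star j^\star(K_B+L)$, one obtains
\begin{equation*}
K_{\tilde X_Z}-\mu^\star K_X\;=\;\tilde\rho^\star\!\left[(K_{B_Z}+\tilde L)-j^\star(K_B+L)\right]\;=\;(r-1-k)\,\tilde\rho^\star E_Z.
\end{equation*}
Since $\tilde\rho^\star E_Z$ is a nonzero effective divisor class on $\tilde X_Z$, the resolution preserves the canonical class if and only if $r-1-k=0$, i.e.\ $k=r-1$. Unpacking the two parities of $m$ gives $m=2r-2$ when $m$ is even and $m=2r-1$ when $m$ is odd, which is exactly the claim.

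The computation is essentially bookkeeping once the canonical bundle formula for double covers and the branch-divisor recipe of the preceding proposition are in hand. The one point that requires care is fixing a consistent sign convention in the canonical bundle formula for the double cover, since an overall flip $L\mapsto -L$ would reverse the sign of the coefficient of $\tilde\rho^\star E_Z$ and break the match with the stated claim; one must use the convention under which the Calabi-Yau condition $K_X=0$ is equivalent to $K_B+L=0$, the same convention already implicit in the rest of the appendix.
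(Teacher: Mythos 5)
Your proof is correct and follows essentially the same route as the paper's: apply the blow-up formula for $K_{B_Z}$, transform the branch divisor via $\tilde{\underline{O}}_Z=j^\star\underline{O}-(m-\epsilon_Z)E_Z$, and demand that the coefficient $r-1-\tfrac{m-\epsilon_Z}{2}$ of the exceptional class vanish. The only difference is cosmetic: you phrase the double-cover canonical bundle formula in terms of $L=\tfrac12[\underline{O}]$ and track the sign convention explicitly, whereas the paper writes the same computation as $K_{\tilde X_Z}=K_{B_Z}+\tfrac12 K_{\tilde{\underline{O}}_Z}$.
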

\begin{proof}
A direct application of  the familiar formula for the canonical class after a blow-up with exceptional divisor $E_Z$ gives 
$$ K_{B_Z}=j^\star K_B+(r-1) E_Z, \quad K_{\underline{\tilde{O}}_Z}=j^\star K_{\underline{O}}-(m-\epsilon_Z) E_Z.$$ 
It follows that 
\begin{equation}
K_{\tilde{X}_Z}=K_{B_Z}+\frac{1}{2}K_{\tilde{\underline{O}}_Z}=j^\star(K_B+\frac{1}{2}K_{\underline{O}})+(r-1-\frac{m-\epsilon_Z}{2})E_Z,
\end{equation}
We see from this formula that the condition to preserve the canonical class is the vanishing of the coefficient in front of $E_Z$, this is equivalent to 
$m-\epsilon_Z+2=2r$ and the only solutions of this equation are  $m=2(r-1)$ and $m=2(r-1)+1=2r-1$. 
\end{proof}
As a direct application, let us see under which conditions the blow-up of a smooth sub-variety  $Z$ of codimension 1, 2 or 3 preserves the canonical class of the double cover:
\begin{lemma}The double cover of a  blow-up of $B$ along a  subvariety $Z\subset B$ or rank $r\leq 3$ will preserve the canonical class of the double cover of $B$ in the following cases:
\begin{enumerate}
\item $Z$ is  of codimension  $r=1$ in $B$  and  multiplicity $m=0$ or $m=1$ along $\underline{O}$.  
\item $Z$ is  of codimension  $r=2$ in $B$  and  multiplicity $m=2$ or $m=3$ along $\underline{O}$.
\item $Z$ is  of codimension  $r=3$ in $B$  and  multiplicity $m=4$ or $m=5$ along $\underline{O}$.
\end{enumerate}
 \end{lemma}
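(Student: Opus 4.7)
The lemma is an immediate corollary of the proposition stated just above it, so the proof proposal is essentially a direct specialization rather than a new argument. The plan is to invoke the proposition, which asserts that a blow-up along a smooth sub-variety $Z$ of codimension $r$ preserves the canonical class of the double cover if and only if the multiplicity $m$ of $Z$ inside the branch divisor $\underline{O}$ satisfies $m = 2r-2$ or $m = 2r-1$, and then evaluate this numerical condition for $r = 1, 2, 3$.

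Concretely, first I would substitute $r=1$ into $m \in \{2r-2, 2r-1\}$ to obtain $m \in \{0,1\}$, yielding case (1); then $r=2$ to obtain $m \in \{2,3\}$, yielding case (2); and finally $r=3$ to obtain $m \in \{4,5\}$, yielding case (3). Each of these follows mechanically from the formula
\begin{equation*}
K_{\tilde X_Z} = j^\star\!\left(K_B + \tfrac{1}{2}K_{\underline{O}}\right) + \left(r - 1 - \tfrac{m-\epsilon_Z}{2}\right) E_Z,
\end{equation*}
which was derived in the proof of the preceding proposition, together with the observation that the coefficient of $E_Z$ vanishes precisely for the two integer values $m=2r-2$ (for which $\epsilon_Z = 0$, since $m$ is even) and $m=2r-1$ (for which $\epsilon_Z = 1$, since $m$ is odd).

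There is no genuine obstacle here: once the general proposition is in hand, the lemma reduces to enumerating the allowed codimensions. The only point worth double-checking is the parity convention that determines $\epsilon_Z$ in each case, so that one confirms $m-\epsilon_Z = 2(r-1)$ in both branches of the condition, but this is already embedded in the proposition's proof. Therefore the lemma will follow without additional input, and the entire argument is essentially a table of values of the function $r \mapsto \{2r-2, 2r-1\}$ for $r \leq 3$.
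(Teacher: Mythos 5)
Your proposal is correct and matches the paper exactly: the lemma is presented there as a ``direct application'' of the preceding proposition, with no further argument beyond substituting $r=1,2,3$ into the condition $m\in\{2r-2,\,2r-1\}$. Your check of the parity convention for $\epsilon_Z$ is consistent with the proof of that proposition, so nothing further is needed.
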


\subsection{Resolution of $t^2=xy$}
We consider a  double cover $\rho:X\rightarrow B$ with a branch divisor which has a $A_1$ singularity along a codimension-2 locus. The equation of such a double cover is 
the quadric cone 
\begin{equation}
 X:t^2=xy.
\end{equation}
The codimension-2 singular locus is  $t=x=y=0$.  
Rewriting the defining equation of $X$ as the following rational relation 
\begin{equation}
\frac{t}{y}= \frac{x}{t},
\end{equation}
we can easily find a crepant  resolution by introducing a $\mathbb{P}^1$ with projective coordinates $[\alpha:\beta]$ and imposing the rational relation   $x/t=\alpha/ \beta$. This leads to the crepant  resolution 
\begin{equation}
\tilde{X}
\begin{cases}
\alpha t -\beta x=0\\
\alpha y -\beta t =0
\end{cases} 
\end{equation}
written in  the ambient space $X\times \mathbb{P}^1$. 
It is easy to check that $\tilde{X}\rightarrow X$  is indeed a resolution of singularities:
(i)  $\tilde{X}$ is smooth since its Jacobian has rank 3.
 (ii) $\tilde{X}\rightarrow X$ is defined by the projection $(x,y,t)\times[\alpha:\gamma]\mapsto (x,y,t)$. It is trivially a birational transformation.  (iii)The exceptional locus is a $\mathbb{P}^1$ fibration over $x=y=t=0$. 
(iv) The projection $\tilde{X}\rightarrow X$ is an isomorphism  away from the codimension-2 locus $x=y=t=0$ of $X$ and its inverse image in $\tilde{X}$. 

\subsubsection{Involution}
The resolution $\tilde{X}$ admits an involution that reduces to the involution of the double cover $X$.  It is induced by the following involution of the ambient space:
\begin{equation}(t,x,y,[\alpha:\beta])\mapsto (-t,x,y,[\alpha:-\beta)].\end{equation}
In the ambient space, the fixed locus is $t=\beta\alpha=0$, the union of two non-intersecting  codimension-2 subvarieties $t=\alpha=0$ and $t=\beta=0$. When restricted to  $\tilde{X}$, it reduces to  the union of  two non-intersecting divisors: $t=\beta=y=0$ and $t=\alpha=x=0$.
 
\subsection{Crepant resolution of $t^2=x^2 z -y^2$.}\label{AppSpp}
The double cover $X:t^2=x^2 z -y^2$ is singular along the codimension-2 locus $t=x=y=0$ of multiplicity 2. The singularity worsen in codimension-3  at $t=x=y=z=0$. 
To resolve the variety $t^2=x^2 z -y^2$, we rewrite it as $t^2+y^2=x^2 z$ and after a change of variable $u_\pm=y\pm t \mathrm{ i}$ (where $\mathrm{i}^2=-1$), it becomes the normal equation of the  {\em suspended pinch point}:
$$
u_+ u_- =x^2 z.
$$
We can write is as the rational relation $\frac{u_+}{x}\frac{u_-}{x}=z$.  We introduce two $\mathbb{P}^1$s parametrized by $[\alpha_\pm:\beta_\pm]$ and  we take $u_\pm/x=\alpha_\pm/\beta_\pm$. This  respects the involution $t\mapsto -t$ (or equivalently $u_\pm\mapsto u_\mp$). The crepant resolution is  
\begin{equation}
\tilde{X}
\begin{cases}
\beta_+ u_+ -x\alpha_+ &=0\\
\beta_- u_- -x\alpha_- &=0\\
\alpha_+ \alpha_-  \  - z\beta_+ \beta_- &=0 
\end{cases}
\end{equation}
Over $x=y=t=0$, we have a $\mathbb{P}^1$ given by a quadric in $\mathbb{P}^1\times\mathbb{P}^1$. It  enhances to two transversally intersecting  $\mathbb{P}^1$  (given by a line in each of the two  rulings of   $\mathbb{P}^1\times\mathbb{P}^1$)  as the quadric degenerates over  $x=y=t=z=0$.
The involution is 
$$(t,x,y,z,[\alpha_\pm:\beta_\pm])\mapsto (-t,x,y,z,[\alpha_\mp :\beta_\mp]).$$
The fixed locus is  
$t=\alpha_+ -\alpha_- =\beta_+ - \beta_- =x^2z -y^2 =\alpha_+^2-z \beta_+^2=0$.

\subsection{Resolution of $t^2= x z+y^2$.}\label{AppConifold}
To resolve the variety $X:t^2=x z -y^2$, we rewrite it as $t^2+y^2=x z$ and after a change of variable $u_\pm =y\pm t $, it takes the normal form of the conifold
$$
X: \quad u_+ u_- =x z.
$$
We can write is as $\frac{u_+}{x}=\frac{z}{u_-}$.  We introduce a $\mathbb{P}^1$s parametrized by $[\alpha:\beta]$ and satisfying  $u_+/x=\alpha/\beta$. This leads to  the resolution 
\begin{equation}
\tilde{X}
\begin{cases}
 \beta  u_+  - \alpha x &=0\\
\beta  z     \  \   - \alpha u_- &=0
\end{cases}
\end{equation}
However this small resolution does not respect the involution of $X$.

\section{The type IIB Calabi-Yau threefold: Quadric cone case}\label{BlowUpQuadricCone}

In this section we want to present a resolution of the singular double cover which admits the singularity of a quadric cone as in eq. \eqref{QuadricConeSing}, namely 
$$X: \xi^2=\sigma a_{2,1}.$$
Like in section \ref{MathematicalProperties}, we will use toric methods as toric notations are very popular with physicists working on model building.

The resolved geometry is specified by the system
\begin{equation}
\tilde{X}_3\;:\quad
\begin{cases}
\xi^2&=s \,a\\ 
w\,s&=\sigma \\ 
w\,a&=a_{2,1}\end{cases}
\qquad \begin{array}{cccc}s&a&\xi&w\\ \hline 1&1&1&-1\end{array}
\end{equation}
where on the right the ``exceptional'' projective weight is displayed. This is now a perfectly smooth Calabi-Yau threefold, still invariant under the orientifold involution\footnote{In this case one can alternatively define the orientifold involution by reversing the sign of $s,a,w$ at the same time. This is clearly gauge-equivalent to sending $\xi\to-\xi$.}. Let us study some of the properties of this new geometry. 

The two branches of the O7-plane $\xi=0$ look as follows
\begin{equation}
{\rm O7}:
\begin{cases}\xi&=0\\ 
s&=0 \\ 
\sigma&=0\\ 
w&=a_{2,1}\,
\end{cases}
\qquad\cup\qquad
\begin{cases}
\xi &=0\\ 
a &=0 \\
 a_{2,1}&=0\\ 
w  &= \sigma
\end{cases}
\end{equation}
They do not intersect any more. The first branch, being also the locus where the original stack of $N$ D7-branes sits (together with its image), hosts an $SO(2N)$ gauge theory. The former locus of singularities, namely the intersection of the two branches, is now replaced by the exceptional divisor
\begin{equation}
E:\begin{cases}
w&=0\\
 a_{2,1}&=0 \\
 \sigma&=0\\
 \xi^2&=s\,a
\end{cases}
\end{equation}
which has the geometry of an orientifold-invariant, quadratic $\mathbb{P}^1$ with coordinates $a,s$, fibered over the locus $\{\sigma=0\}\cap\{a_{2,1}=0\}\subset B_3$.  As was the case in section \ref{resCY3spp}, also here the resolved Calabi-Yau threefold can be seen as the double cover of the resolved base $\tilde{B}_3$, the latter being the blow-up of $B_3$ along the curve $\{\sigma=a_{2,1}=0\}$.

\thebibliography{99}

\bibitem{Vafa:1996xn} 
  C.~Vafa,
  ``Evidence for F theory,''
  Nucl.\ Phys.\ B {\bf 469}, 403 (1996)
  [hep-th/9602022].

%
  \bibitem{Morrison:1996na}
  D.~R.~Morrison, C.~Vafa,
  ``Compactifications of F theory on Calabi-Yau threefolds. 1,''
  Nucl.\ Phys.\  {\bf B473}, 74-92 (1996).
  [hep-th/9602114].

\bibitem{Bershadsky:1996nh}
  M.~Bershadsky, K.~A.~Intriligator, S.~Kachru, D.~R.~Morrison, V.~Sadov, C.~Vafa,
  ``Geometric singularities and enhanced gauge symmetries,''
  Nucl.\ Phys.\  {\bf B481}, 215-252 (1996).
  [hep-th/9605200].

\bibitem{FMW} 
  R.~Friedman, J.~Morgan and E.~Witten,
  ``Vector bundles and F theory,''
  Commun.\ Math.\ Phys.\  {\bf 187}, 679 (1997)
  [hep-th/9701162].

\bibitem{KLRY}
  A.~Klemm, B.~Lian, S.~S.~Roan, S.~-T.~Yau,
  ``Calabi-Yau fourfolds for M theory and F theory compactifications,''
  Nucl.\ Phys.\  {\bf B518}, 515-574 (1998).
  [hep-th/9701023].

\bibitem{AE1}  P.~Aluffi, M.~Esole,
  ``Chern class identities from tadpole matching in type IIB and F-theory,''JHEP {\bf 0903}, 032 (2009).
  [arXiv:0710.2544 [hep-th]].
\bibitem{AE2}
  P.~Aluffi, M.~Esole,  ``New Orientifold Weak Coupling Limits in F-theory,'' JHEP {\bf 1002}, 020 (2010).
  [arXiv:0908.1572 [hep-th]].
\bibitem{EY}
M.~Esole and S.~-T.~Yau,
  ``Small resolutions of SU(5)-models in F-theory,''
  arXiv:1107.0733 [hep-th].
\bibitem{EFY}
  M.~Esole, J.~Fullwood, S.~-T.~Yau,
  ``D5 elliptic fibrations: non-Kodaira fibers and new orientifold limits of F-theory,''
    [arXiv:1110.6177 [hep-th]].

\bibitem{GM1} 
  A.~Grassi and D.~R.~Morrison,
  ``Group representations and the Euler characteristic of elliptically fibered Calabi-Yau threefolds,''
  math/0005196 [math-ag].

\bibitem{GM2}
  A.~Grassi, D.~R.~Morrison,
  ``Anomalies and the Euler characteristic of elliptic Calabi-Yau threefolds,''
    [arXiv:1109.0042 [hep-th]].

\bibitem{Braun:2011ux} 
  V.~Braun,
  ``Toric Elliptic Fibrations and F-Theory Compactifications,''
  arXiv:1110.4883 [hep-th].

\bibitem{Park:2011ji} 
  D.~S.~Park,
  ``Anomaly Equations and Intersection Theory,''
  JHEP {\bf 1201}, 093 (2012)
  [arXiv:1111.2351 [hep-th]].

\bibitem{Morrison:2011mb} 
  D.~R.~Morrison and W.~Taylor,
  ``Matter and singularities,''
  JHEP {\bf 1201}, 022 (2012)
  [arXiv:1106.3563 [hep-th]].

\bibitem{Fullwood:2011bf} 
  J.~Fullwood and M.~van Hoeij,
  ``On Hirzebruch invariants of elliptic fibrations,''
  arXiv:1111.0017 [math.AG].

\bibitem{Morrison:2012js} 
  D.~R.~Morrison and W.~Taylor,
  ``Toric bases for 6D F-theory models,''
  arXiv:1204.0283 [hep-th].
\bibitem{Taylor:2012dr} 
  W.~Taylor,
  ``On the Hodge structure of elliptically fibered Calabi-Yau threefolds,''
  JHEP {\bf 1208}, 032 (2012)
  [arXiv:1205.0952 [hep-th]].
\bibitem{Morrison:2012ei} 
  D.~R.~Morrison and D.~S.~Park,
  ``F-Theory and the Mordell-Weil Group of Elliptically-Fibered Calabi-Yau Threefolds,''
  arXiv:1208.2695 [hep-th].

\bibitem{Denef.LH} 
  F.~Denef,
  ``Les Houches Lectures on Constructing String Vacua,''
  arXiv:0803.1194 [hep-th].

\bibitem{Sen:1996vd} 
  A.~Sen,
  ``F theory and orientifolds,''
  Nucl.\ Phys.\ B {\bf 475}, 562 (1996)
  [hep-th/9605150].

\bibitem{Sen.Orientifold}
  A.~Sen,
  ``Orientifold limit of F theory vacua,''
  Phys.\ Rev.\  {\bf D55}, 7345-7349 (1997).
  [hep-th/9702165].

\bibitem{CDE}
  A.~Collinucci, F.~Denef, M.~Esole,
  ``D-brane Deconstructions in IIB Orientifolds,''
  JHEP {\bf 0902}, 005 (2009).
  [arXiv:0805.1573 [hep-th]].

\bibitem{Blumenhagen:2009up} 
  R.~Blumenhagen, T.~W.~Grimm, B.~Jurke and T.~Weigand,
  ``F-theory uplifts and GUTs,''
  JHEP {\bf 0909}, 053 (2009)
  [arXiv:0906.0013 [hep-th]].

\bibitem{Braun:2009wh} 
  A.~P.~Braun, R.~Ebert, A.~Hebecker and R.~Valandro,
  ``Weierstrass meets Enriques,''
  JHEP {\bf 1002}, 077 (2010)
  [arXiv:0907.2691 [hep-th]].

\bibitem{Donagi:2009ra} 
  R.~Donagi and M.~Wijnholt,
  ``Higgs Bundles and UV Completion in F-Theory,''
  arXiv:0904.1218 [hep-th].

\bibitem{Tate} J.T.~ Tate, ``The Arithmetics of Elliptic Curves,'' Inventiones math. 23, 170-206 (1974)

\bibitem{Katz:2011qp} 
  S.~Katz, D.~R.~Morrison, S.~Schafer-Nameki and J.~Sully,
  ``Tate's algorithm and F-theory,''
  JHEP {\bf 1108}, 094 (2011)
  [arXiv:1106.3854 [hep-th]].

\bibitem{Collinucci:2010gz} 
  A.~Collinucci and R.~Savelli,
  ``On Flux Quantization in F-Theory,''
  JHEP {\bf 1202}, 015 (2012)
  [arXiv:1011.6388 [hep-th]].
 
\bibitem{Collinucci:2012as} 
  A.~Collinucci and R.~Savelli,
  ``On Flux Quantization in F-Theory II: Unitary and Symplectic Gauge Groups,''
  JHEP {\bf 1208}, 094 (2012)
  [arXiv:1203.4542 [hep-th]].

\bibitem{Donagi:2008ca} 
  R.~Donagi and M.~Wijnholt,
  ``Model Building with F-Theory,''
  arXiv:0802.2969 [hep-th].

\bibitem{Beasley:2008dc} 
  C.~Beasley, J.~J.~Heckman and C.~Vafa,
  ``GUTs and Exceptional Branes in F-theory - I,''
  JHEP {\bf 0901}, 058 (2009)
  [arXiv:0802.3391 [hep-th]].
  
  \bibitem{Beasley:2008kw} 
  C.~Beasley, J.~J.~Heckman and C.~Vafa,
  ``GUTs and Exceptional Branes in F-theory - II: Experimental Predictions,''
  JHEP {\bf 0901}, 059 (2009)
  [arXiv:0806.0102 [hep-th]].

\bibitem{Hayashi:2009ge} 
  H.~Hayashi, T.~Kawano, R.~Tatar and T.~Watari,
  ``Codimension-3 Singularities and Yukawa Couplings in F-theory,''
  Nucl.\ Phys.\ B {\bf 823}, 47 (2009)
  [arXiv:0901.4941 [hep-th]].

\bibitem{Collinucci:2008zs} 
  A.~Collinucci,
  ``New F-theory lifts,''
  JHEP {\bf 0908}, 076 (2009)
  [arXiv:0812.0175 [hep-th]].

\bibitem{Andreas:2009uf} 
  B.~Andreas and G.~Curio,
  ``From Local to Global in F-Theory Model Building,''
  J.\ Geom.\ Phys.\  {\bf 60}, 1089 (2010)
  [arXiv:0902.4143 [hep-th]].

  \bibitem{Collinucci:2009uh} 
  A.~Collinucci,
  ``New F-theory lifts. II. Permutation orientifolds and enhanced singularities,''
  JHEP {\bf 1004}, 076 (2010)
  [arXiv:0906.0003 [hep-th]].

\bibitem{Blumenhagen:2009yv} 
  R.~Blumenhagen, T.~W.~Grimm, B.~Jurke and T.~Weigand,
  ``Global F-theory GUTs,''
  Nucl.\ Phys.\ B {\bf 829}, 325 (2010)
  [arXiv:0908.1784 [hep-th]].

\bibitem{Marsano:2009gv} 
  J.~Marsano, N.~Saulina and S.~Schafer-Nameki,
  ``Monodromies, Fluxes, and Compact Three-Generation F-theory GUTs,''
  JHEP {\bf 0908}, 046 (2009)
  [arXiv:0906.4672 [hep-th]].
\bibitem{Grimm:2009yu} 
  T.~W.~Grimm, S.~Krause and T.~Weigand,
  ``F-Theory GUT Vacua on Compact Calabi-Yau Fourfolds,''
  JHEP {\bf 1007}, 037 (2010)
  [arXiv:0912.3524 [hep-th]].

\bibitem{Cvetic:2010rq} 
  M.~Cvetic, I.~Garcia-Etxebarria and J.~Halverson,
  ``Global F-theory Models: Instantons and Gauge Dynamics,''
  JHEP {\bf 1101}, 073 (2011)
  [arXiv:1003.5337 [hep-th]].

\bibitem{Marsano:2011hv} 
  J.~Marsano and S.~Schafer-Nameki,
  ``Yukawas, G-flux, and Spectral Covers from Resolved Calabi-Yau's,''
  JHEP {\bf 1111}, 098 (2011)
  [arXiv:1108.1794 [hep-th]].

\bibitem{Tatar:2012tm} 
  R.~Tatar and W.~Walters,
  ``GUT theories from Calabi-Yau 4-folds with SO(10) Singularities,''
  arXiv:1206.5090 [hep-th].

\bibitem{Marsano:2012yc} 
  J.~Marsano, H.~Clemens, T.~Pantev, S.~Raby and H.~-H.~Tseng,
  ``A Global SU(5) F-theory model with Wilson line breaking,''
  arXiv:1206.6132 [hep-th].

\bibitem{Weigand:2010wm} 
  T.~Weigand,
  ``Lectures on F-theory compactifications and model building,''
  Class.\ Quant.\ Grav.\  {\bf 27}, 214004 (2010)
  [arXiv:1009.3497 [hep-th]].

\bibitem{Krause:2012he} 
  S.~Krause, C.~Mayrhofer and T.~Weigand,
  ``Gauge Fluxes in F-theory and Type IIB Orientifolds,''
  JHEP {\bf 1208}, 119 (2012)
  [arXiv:1202.3138 [hep-th]].

\bibitem{Braun:2011zm} 
  A.~P.~Braun, A.~Collinucci and R.~Valandro,
  ``G-flux in F-theory and algebraic cycles,''
  Nucl.\ Phys.\ B {\bf 856}, 129 (2012)
  [arXiv:1107.5337 [hep-th]].

\bibitem{Johansen:1996am} 
  A.~Johansen,
  ``A Comment on BPS states in F theory in eight-dimensions,''
  Phys.\ Lett.\ B {\bf 395}, 36 (1997)
  [hep-th/9608186].

\bibitem{Gaberdiel:1997ud} 
  M.~R.~Gaberdiel and B.~Zwiebach,
  ``Exceptional groups from open strings,''
  Nucl.\ Phys.\ B {\bf 518}, 151 (1998)
  [hep-th/9709013].

\bibitem{Bonora:2010bu} 
  L.~Bonora and R.~Savelli,
  ``Non-simply-laced Lie algebras via F theory strings,''
  JHEP {\bf 1011}, 025 (2010)
  [arXiv:1007.4668 [hep-th]].

\bibitem{Vafa:2009se} 
  C.~Vafa,
  ``Geometry of Grand Unification,''
  arXiv:0911.3008 [math-ph].

\bibitem{Kodaira}K.~Kodaira, “On Compact Analytic Surfaces II,” Annals of Math, vol. 77, 1963,563-626.

\bibitem{Neron}
A.~N\'eron, Mod\`eles Minimaux des Vari\'et\'es Abeliennes sur les Corps Locaux et
Globaux, Publ. Math. I.H.E.S. 21, 1964, 361-482.

\bibitem{CS}
S.~Cynk, T.~Szemberg", ``
Double covers and Calabi-Yau varieties, Singularities Symposium--Lojasiewicz 70 (Krak\'ow, 1996; Warsaw, 1996), 93--101, Banach Center Publ., 44, Polish Acad. Sci., Warsaw, 1998. 

\bibitem{Miranda}
R.~Miranda, ``Smooth Models for Elliptic Threefolds,'' in: R. Friedman, D.R.
Morrison (Eds.), The Birational Geometry of Degenerations, Progress in Mathe-matics 29, Birkhauser, 1983, 85-133.

\end{document}